\pgfplotsset{compat=newest}
\newcommand{\set}[1]{\{#1\}}
\newcommand{\setcond}[2]{\{#1\mid\,#2\}}
\newcommand{\projection}[2]{#1\!\downarrow\!#2}
\newcommand{\mytrue}{\ensuremath{\mathit{true}}}
\newcommand{\myfalse}{\ensuremath{\mathit{false}}}
\newcommand{\pspace}{\textsc{PSpace}}
\newcommand{\disunion}{\uplus}
\newcommand{\bottom}{\bot}
\newcommand{\nat}{\mathbb{N}}
\newcommand{\intrange}[2]{[#1..#2]}
\newcommand{\oracle}{\mathcal{O}}
\newcommand{\oracleOf}[1]{\oracle(#1)}
\newcommand{\tso}{\text{TSO}}%
\newcommand{\seqcon}{\text{SC}}%
\newcommand{\memmodel}{\text{M}}
\newcommand{\trencher}{\textsc{Trencher}} %
\newcommand{\memorax}{\textsc{Memorax}}
\newcommand{\cbmc}{\textsc{CBMC}}
\newcommand{\spin}{\textsc{SPIN}}
\newcommand{\automaton}{\ensuremath{A}}
\newcommand{\alphabet}{\ensuremath{\Sigma}}
\newcommand{\states}{\ensuremath{S}}
\newcommand{\transitions}{\ensuremath{\rightarrow}}
\newcommand{\astate}{\ensuremath{s}}
\newcommand{\initialstate}{\ensuremath{\astate_0}}
\newcommand{\finalstates}{\ensuremath{F}}
\newcommand{\transitionto}[1]{\ensuremath{\xrightarrow{#1}}}
\newcommand{\langFinOf}[2]{\ensuremath{\mathcal{L}_{#1}({#2})}}
\newcommand{\succorder}[1]{\ensuremath{<_{#1}}}
\newcommand{\aprogram}{\ensuremath{\mathit{P}}}
\newcommand{\commands}{\ensuremath{\mathit{Com}}}
\newcommand{\controlstates}{\ensuremath{Q}}
\newcommand{\instructions}{\ensuremath{\mathit{I}}}
\newcommand{\acontrolstate}{\ensuremath{q}}
\newcommand{\acontrolstateOf}[1]{\acontrolstate_{#1}}
\newcommand{\initialcontrolstateOf}[1]{\acontrolstate_{0,#1}}
\newcommand{\finalcontrolstateOf}[1]{\acontrolstate_{g,#1}}
\newcommand{\commandOf}[1]{\mathit{cmd}(#1)}
\newcommand{\sourcestateOf}[1]{\ensuremath{\mathit{src}(#1)}}
\newcommand{\destinationstateOf}[1]{\ensuremath{\mathit{dst}(#1)}}
\newcommand{\threaddomain}{\textsf{TID}} %
\newcommand{\datadomain}{\textsf{DOM}}
\newcommand{\addrdomain}{\textsf{DOM}}
\newcommand{\regdomain}{\textsf{REG}}
\newcommand{\expdomain}{\textsf{EXP}}
\newcommand{\areg}{r}
\newcommand{\anaddr}{a}
\newcommand{\aval}{v}
\newcommand{\anexpr}{e}
\newcommand{\theload}[2]{\ensuremath{{#1}\leftarrow \mathtt{mem[}{#2}\mathtt{]}}}
\newcommand{\thestore}[2]{\ensuremath{\mathtt{mem[}{#1}\mathtt{]}\leftarrow{#2}}}
\newcommand{\thelocal}[2]{\ensuremath{{#1}\leftarrow{#2}}}
\newcommand{\theassume}[1]{\ensuremath{\mathtt{assume~}{#1}}}
\newcommand{\themfence}[0]{\ensuremath{\mathtt{mfence}}}
\newcommand{\therules}[1]{{\small\setlength{\arraycolsep}{1em}$$\begin{array}{c}#1\end{array}$$}}
\newcommand{\therule}[3]{\dfrac{\text{#1}}{\text{#2}}~(\text{#3})\\[2em]}
\newcommand{\doublerule}[6]{\dfrac{\text{#1}}{\text{#2}}~(\text{#5})\qquad\dfrac{\text{#3}}{\text{#4}}~(\text{#6})\\[2em]}
\newcommand{\eventautomaton}[2]{\ensuremath{X_{#1}(#2)}}
\newcommand{\events}{\ensuremath{E}}
\newcommand{\eventstatesOf}[1]{\states_{#1}}
\newcommand{\eventtransitions}[1]{\Delta_{#1}}
\newcommand{\initialeventstate}{\astate_{0}}
\newcommand{\finaleventstates}{\finalstates}
\newcommand{\indexconf}{\mathsf{ec}}
\newcommand{\pcconf}{\mathsf{pc}}
\newcommand{\valconf}{\mathsf{val}}
\newcommand{\bufconf}{\mathsf{buf}}
\newcommand{\initpcconf}{\pcconf_0}
\newcommand{\initvalconf}{\valconf_0}
\newcommand{\initbufconf}{\bufconf_0}
\newcommand{\writekind}{\ensuremath{\text{flush}}}
\newcommand{\storeevent}{\ensuremath{\mathtt{store}}}
\newcommand{\loadevent}{\ensuremath{\mathtt{load}}}
\newcommand{\flushevent}{\ensuremath{\mathtt{flush}}}
\newcommand{\assignevent}{\ensuremath{\mathtt{assign}}}
\newcommand{\fenceevent}{\ensuremath{\mathtt{fence}}}
\newcommand{\anevent}{\mathtt{e}} %
\newcommand{\threadOf}[1]{\mathit{thread}(#1)}
\newcommand{\instructionOf}[1]{\mathit{inst}(#1)}
\newcommand{\addrOf}[1]{\mathit{addr}(#1)}
\newcommand{\anindex}{\mathit{id}}
\newcommand{\aninstruction}{\ensuremath{\mathit{inst}}}
\newcommand{\acommand}{\ensuremath{\mathit{cmd}}}
\newcommand{\valOf}[1]{\ensuremath{\widehat{#1}}}
\newcommand{\computationsOf}[2]{\mathcal{C}_{#1}({#2})} 
\newcommand{\reachOf}[2]{\textit{Reach}_{#1}{(#2)}}
\newcommand{\goalstates}{G}
\newcommand{\acceptingstates}{\ensuremath{\goalstates}} %
\newcommand{\one}[1]{{#1}_1}
\newcommand{\two}[1]{{#1}_2}
\newcommand{\xaddr}{\text{x}}
\newcommand{\yaddr}{\text{y}}
\newcommand{\wit}{\text{wit}}
\newcommand{\hb}{\mathit{hb}}
\newcommand{\po}{\mathit{po}}
\newcommand{\cf}{\mathit{cf}}
\newcommand{\equivalence}{}
\newcommand{\order}[1]{\rightarrow_{#1}}
\newcommand{\progorder}{\order{\po}}
\newcommand{\conflictorder}{\order{\cf}}
\newcommand{\happensbefore}{\order{\hb}}
\newcommand{\happensbeforeOf}[1]{\order{\hb}(#1)}
\newcommand{\equivalenceorder}{\leftrightarrow_{\equivalence}}
\newcommand{\attackthread}{\athread}%
\newcommand{\attackstore}{\storeevent}%
\newcommand{\attackflush}{\flushevent}%
\newcommand{\attackload}{\loadevent}%
\newcommand{\athread}{t}
\newcommand{\aprog}{\ensuremath{\mathit{P}}}
\newcommand{\bprog}{\ensuremath{\mathit{R}}}
\newcommand{\atsocomput}{\tau}
\newcommand{\awitness}{\atsocomput}
\newcommand{\glue}{\ensuremath{\oplus}}
\newcommand{\reowit}{\sigma}
\newcommand{\reowitof}[1]{\reowit_{#1}}
\newcommand{\wita}{\text{(W1)}}
\newcommand{\witb}{\text{(W2)}}
\newcommand{\witc}{\text{(W3)}}
\newcommand{\witd}{\text{(W4)}}
\newcommand{\wite}{\text{(W5)}}
\newcommand{\adrreg}{ar}
\newcommand{\valreg}{vr}
\newcommand{\bound}{\mathtt{max}}
\newcommand{\iterator}{\mathtt{count}}
\newcommand{\atransition}[3]{(\ensuremath{#1},\,#3,\,\ensuremath{#2})}
\newcommand{\wsproj}[1]{\ensuremath{h_{#1}}}
\newcommand{\proj}[1]{\ensuremath{f_{#1}}}
\newcommand{\statesextended}{\states_\glue}
\newcommand{\eventsextended}{\events_\glue}
\newcommand{\initialextended}{\astate_\glue}
\newcommand{\finalextended}{\finalstates_\glue}
\newcommand{\instructionsextended}{\instructions_\glue}
\newcommand{\acomp}{\ensuremath{\alpha}}
\newcommand{\bcomp}{\ensuremath{\beta}}
\newcommand{\extensionstate}[1]{\ensuremath{{\overline\acontrolstate}_{#1}}}
\providecommand{\@fourthoffour}[4]{#4}
\def\fixstatement#1{%
  \AtEndEnvironment{#1}{%
    \xdef\pat@label{\expandafter\expandafter\expandafter
      \@fourthoffour\csname#1\endcsname\space\@currentlabel}}}
\globtoksblk\prooftoks{1000}
\newcounter{proofcount}
\long\def\proofatend#1\endproofatend{%
  \edef\next{\noexpand\begin{proof}[of {\#}\pat@label]}%
  \toks\numexpr\prooftoks+\value{proofcount}\relax=\expandafter{\next#1\end{proof}}
  \stepcounter{proofcount}}
\long\def\lemmaatend#1\endlemmaatend{%
  \edef\next{}%
  \toks\numexpr\prooftoks+\value{proofcount}\relax=\expandafter{\next#1}
  \stepcounter{proofcount}}
\def\printproofs{%
  \count@=\z@
  \loop
    \the\toks\numexpr\prooftoks+\count@\relax
    \ifnum\count@<\value{proofcount}%
    \advance\count@\@ne
  \repeat}
\newtheorem{atheorem}{Theorem}
\newtheorem{alemma}[atheorem]{Lemma}
\newcommand{\acks}{\paragraph*{Acknowledgements}}
\begin{document}
\title{Lazy \tso{} Reachability}
\author{Ahmed Bouajjani\inst{1} \and Georgel Calin\inst{2} \and Egor Derevenetc\inst{2,3} \and Roland Meyer\inst{2}}
\authorrunning{A. Bouajjani, G. Calin, E. Derevenetc, R. Meyer}
\institute{\hspace{-0.5em}$^1$LIAFA, University Paris 7 \quad $^2$University of Kaiserslautern \quad $^3$Fraunhofer ITWM}
\newif\ifappendix
\maketitle
\begin{abstract}
We address the problem of checking state reachability for programs running under Total Store Order (\tso). 
The problem has been shown to be decidable but the cost is prohibitive, namely non-primitive recursive.
We propose here to give up completeness.
Our contribution is a new algorithm for TSO reachability: 
it uses the standard SC semantics and introduces the TSO semantics lazily and only where needed.
At the heart of our algorithm is an iterative refinement of the program of interest.
If the program's goal state is SC-reachable, we are done.
If the goal state is not SC-reachable, this may be due to the fact that SC under-approximates TSO.
We employ a second algorithm that determines TSO computations which are infeasible under SC, and hence likely to lead to new states.
We enrich the program to emulate, under SC, these TSO computations. %
Altogether, this yields an iterative under-approximation that we prove sound and complete for bug hunting, i.e., a semi-decision procedure halting for positive cases of reachability.
We have implemented
the procedure as an extension to the tool Trencher~\cite{Trencher} and compared
it to the \memorax{}~\cite{AbdullaACLR12} and \cbmc{}~\cite{Clarke04atool} model checkers.
\end{abstract}
\section{Introduction}
Sequential consistency (\seqcon)~\cite{Lamport79} is the semantics typically  assumed for parallel programs.
Under \seqcon{}, instructions are executed atomically and in program order.
When programs are executed on an Intel x86 processor, however, they are only guaranteed a weaker semantics known as Total Store Order (\tso).
\tso\ weakens the synchronization guarantees given by \seqcon{}, which in turn may lead to erroneous behavior.
\tso\ reflects the architectural optimization of store buffers.
To reduce the latency of memory accesses, store commands are added to a thread-local FIFO buffer and only later executed on memory.

To check for correct behavior, reachability techniques have proven useful.
Given a program and a goal state, the task is to check whether the state is reachable.
To give an example, assertion failures can be phrased as reachability problems.
Reachability depends on the underlying semantics.
Under \seqcon, the problem is known to be \pspace-complete~\cite{Kozen77}.
Under \tso{}, it is considerably more difficult: although decidable, it is non-primitive recursive-hard~\cite{ABBM10}.

Due to the high complexity,
tools rarely provide decision procedures~\cite{AbdullaACLR12,LW10,LindenW11}.
Instead, most approaches implement approximations.
Typical approximations of \tso\ reachability bound the number of loop iterations~\cite{AlglaveKNT13,PartialOrders},
the number of context switches between threads~\cite{ABP11},
or the size of store buffers~\cite{KVY2011,KupersteinVY12}.
What all these approaches have in common is that they introduce store buffering in the \emph{whole} program.
We claim that such a comprehensive instrumentation is unnecessarily heavy.

The idea of our method is to introduce store buffering lazily and only where needed.
Unlike~\cite{AbdullaACLR12}, we do not target completeness.
Instead, we argue that our lazy \tso{} reachability checker is useful for a fast detection of bugs that are due to the \tso{} semantics.
At a high level, we solve the expensive \tso{} reachability problem with a series of cheap \seqcon{} reachability checks --- very much like SAT solvers are invoked as subroutines of costlier analyses.
The \seqcon\ checks run interleaved with queries to an oracle.
The task of the oracle is to suggest sequences of instructions that
should be considered under \tso, which means they are likely to lead to \tso-reachable states outside \seqcon.

To be more precise, the algorithm iteratively repeats the following steps.
First, it checks whether the goal state is \seqcon{}-reachable.
If this is the case, the state will be \tso{}-reachable as well and the algorithm returns.
If the state is not \seqcon-reachable, the algorithm asks the oracle for a sequence of instructions and encodes the \tso{} behavior of the sequence into the input program.
As a result, precisely this \tso{} behavior becomes available under \seqcon{}.
The encoding is linear in the size of the input program and in the length of the sequence.

The algorithm is a semi-decision procedure: it always returns correct answers and is guaranteed to terminate if the goal state is \tso{}-reachable.
This guarantee relies on one assumption on the oracle.
If the oracle returns the empty sequence, then the \seqcon{}- and the \tso{}-reachable states of the input program have to coincide.
We also come up with a good oracle:
robustness checkers naturally meet the above requirement.
Intuitively, a program is robust against \tso{} if its partial order-behaviors (reflecting data and control dependencies) under \tso\ and under \seqcon\ coincide.
Robustness is much easier than \tso{} reachability, actually \pspace{}-complete~\cite{BMM11,BDM13}, and hence well-suited for iterative invocations.

We have im\-ple\-men\-ted lazy TSO reachability as an ex\-ten\-sion to our tool \trencher~\cite{Trencher}, reusing the robustness checking algorithms of \trencher{} to derive an oracle.
The implementation is able to solve positive instances of \tso{} reachability as well as correctly determine safety for robust programs.
The source code and experiments are available online~\cite{Trencher}. %

The structure of the paper is as follows.
We introduce parallel programs with their \tso\ and their \seqcon\ semantics in Section~\ref{Section:Programs}.
Section~\ref{Section:LazyReachability} presents our main contribution, the lazy approach to solving \tso{} reachability.
Section~\ref{Section:Robustness} describes the robustness-based oracle.
The experimental evaluation is given in Section~\ref{Section:Experiments}.
Details and proofs missing in the main text can be found in the appendix.

\subsection*{Related Work}
As already mentioned, \tso{} reachability was proven decidable but non-primitive recursive~\cite{ABBM10} in the case of a finite number of threads and a finite data domain.
In the same setting, robustness was shown to be \pspace-complete~\cite{BMM11}.
Checking and enforcing robustness against weak memory models has been addressed in
\cite{burckhardt-musuvathi-CAV08,Sen2011,Alglave2010,AlglaveM11,BMM11,BDM13,ShashaSnir88}.
The first work to give an efficient sound and complete decision procedure for checking robustness is~\cite{BDM13}.

The works~\cite{AbdullaACLR12,LW10,LindenW11} propose state-based techniques to solve \tso{} reachability.
An under-approximative method that uses bounded context switching is given in~\cite{ABP11}.
It encodes store buffers into a linear-size instrumentation, and the instrumented program is checked for \seqcon{} reachability.
The under-approximative techniques of~\cite{AlglaveKNT13,PartialOrders} are able to guarantee safety only for programs with bounded loops.
On the other side of the spectrum,
over-approximative analyses abstract store buffers into sets combined with bounded queues~\cite{KVY2011,KupersteinVY12}. 
\section{Parallel Programs}\label{Section:Programs} %
We use automata to define the syntax and the semantics of parallel programs.
A (non-deterministic) \emph{automaton} over an alphabet $\alphabet$ is a tuple $\automaton=(\alphabet,\states,\transitions,\initialstate)$,
where $\states$ is a set of states,
$\transitions\,\subseteq\states\times(\alphabet\cup\set{\varepsilon})\times\states$ is a set of transitions, and
$\initialstate\in\states$ is an initial state.
The automaton is \emph{finite} if the transition relation $\transitions$ is finite. %
We write $\astate\transitionto{a}\astate'$ if $(\astate,a,\astate')\in\,\transitions$,
and extend the transition relation to sequences $w\in\alphabet^*$ as expected.
The \emph{language of $\automaton$ with final states $\finalstates\subseteq \states$} is
$\langFinOf{\finalstates}{\automaton}:=\setcond{w\in\alphabet^*}{\initialstate\transitionto{w}\astate\in\finalstates}$.
We say that state $\astate\in\states$ is \emph{reachable} if $\initialstate\transitionto{w}\astate$ for some sequence $w\in\alphabet^*$.
Letter \emph{$a$ precedes $b$ in $w$}, denoted by $a\succorder{w}b$,
if $w=w_1\cdot a\cdot w_2\cdot b\cdot w_3$ for some $w_1,w_2,w_3\in\alphabet^*$.

A parallel program $\aprogram$ is a finite sequence of threads that are identified by indices $\athread$ from $\threaddomain$.
Each thread $\athread:=
(\commands_\athread,\controlstates_\athread,\instructions_\athread,\initialcontrolstateOf{\athread})$
is a finite automaton with transitions $\instructions_\athread$ that we call \emph{instructions}.
Instructions $\instructions_\athread$ are labelled by \emph{commands} from the set $\commands_\athread$ which we define in the next paragraph.
We assume, wlog., that states of different threads are disjoint.
This implies that the sets of instructions of different threads are distinct.
We use $\instructions:=\biguplus_{\athread\in\threaddomain}\instructions_\athread$ for all instructions and
$\commands:=\bigcup_{\athread\in\threaddomain}\commands_\athread$ for all commands.
For an instruction $\aninstruction:=(\astate,\acommand,\astate')$ in $\instructions$, we define $\commandOf{\aninstruction}:=\acommand$, $\sourcestateOf{\aninstruction}:=\astate$, and $\destinationstateOf{\aninstruction}:=\astate'$.

\begin{wrapfigure}{r}{0.465\textwidth}
	\vskip -2.5em
	\centering\small
	\begin{tikzpicture}[->,>=stealth',shorten >=1pt, auto, node distance=1cm, transform shape, scale=1,baseline=.65cm]%
		\tikzstyle{every state}=[fill=white,circle,draw=black,inner sep=0pt,text=black,minimum size=6pt]
		\begin{scope} %
			\node[state,initial,initial left,initial text=$\athread_1$,label=right:$\acontrolstate_{0,1}$] (q10) {};
			\node[state,label=right:$\acontrolstate_{1,1}$] (q11) [below of=q10] {};
			\node[state,label=right:$\acontrolstate_{2,1}$] (q12) [below of=q11] {};
			\node[accepting,state,label=right:$\finalcontrolstateOf{1}$] (q13) [below of=q12] {};
			\path	(q10)		edge	node [swap] {$\thestore{\xaddr}{1}$} (q11)
					(q11)		edge 	node [swap] {$\theload{r_1}{\yaddr}$} (q12)
					(q12)		edge	node [swap] {$\theassume{r_1\hspace{-.2em}=\hspace{-.1em}0}$} (q13);
		\end{scope}
		\begin{scope} %
			\node[state,initial,initial left,initial text=$\athread_2$,label=right:$\acontrolstate_{0,2}$] (q20) [node distance=2.65cm,right of=q10] {};
			\node[state,label=right:$\acontrolstate_{1,2}$] (q21) [below of=q20] {};
			\node[state,label=right:$\acontrolstate_{2,2}$] (q22) [below of=q21] {};
			\node[accepting,state,label=right:$\finalcontrolstateOf{2}$] (q23) [below of=q22] {};
			\path	(q20)		edge	node [swap] {$\thestore{\yaddr}{1}$} (q21)
					(q21)		edge 	node [swap] {$\theload{r_2}{\xaddr}$} (q22)
					(q22)		edge	node [swap] {$\theassume{r_2\hspace{-.2em}=\hspace{-.1em}0}$} (q23);
		\end{scope}	
	\end{tikzpicture}
	\caption{Simplified Dekker's algorithm.}
	\label{Figure:Dekker}
	\vskip -1em
\end{wrapfigure}
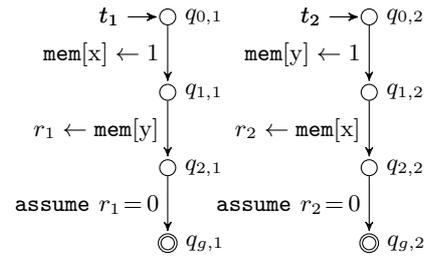

To define the set of commands, let $\datadomain$ be a finite domain of values that we also use as addresses.
We assume that value $0$ is in $\datadomain$.
For each thread $\athread$, let $\regdomain_\athread$ be a finite set of registers that take their values from $\datadomain$.
We assume per-thread disjoint sets of registers.
The set of expressions of thread $\athread$, denoted by $\expdomain_\athread$, is defined over registers from $\regdomain_\athread$,
constants from $\datadomain$, and (unspecified) operators over $\datadomain$.
If $\areg\in\regdomain_\athread$ and $\anexpr,\anexpr'\in\expdomain_\athread$,
the set of commands $\commands_\athread$ consists of
loads from memory $\theload{\areg}{\anexpr}$,
stores to memory $\thestore{\anexpr}{\anexpr'}$,
memory fences $\themfence$,
assignments $\thelocal{\areg}{\anexpr}$,
and conditionals $\theassume{\anexpr}$.
We write $\regdomain:=\biguplus_{\athread\in\threaddomain}\regdomain_\athread$ for all registers
and $\expdomain:=\bigcup_{\athread\in\threaddomain}\expdomain_\athread$ for all expressions.

The program in Figure~\ref{Figure:Dekker} serves as our running example.
It consists of two threads $\athread_1$ and $\athread_2$ implementing a mutual exclusion protocol.
Initially, the addresses $\xaddr$ and $\yaddr$ contain $0$.
The first thread signals its intent to enter the critical section by setting variable $\xaddr$ to $1$.
Next, the thread checks whether the second thread wants to enter the critical section, too.
It reads variable $\yaddr$ and, if it is $0$, the first thread enters its critical section.
The critical section actually is the state $\finalcontrolstateOf{1}$.
The second thread behaves symmetrically.
\subsection{Semantics of Parallel Programs}\label{Subsection:Semantics}
The semantics of a parallel program $\aprogram$ under memory model $\memmodel=\tso$ and $\memmodel=\seqcon$ follows~\cite{Owens2009}.
We define the semantics in terms of a \emph{state-space automaton}
$\eventautomaton{\memmodel}{\aprogram}:=(\events,\eventstatesOf{\memmodel},\eventtransitions{\memmodel},\initialeventstate)$.
Each state $\astate=(\pcconf, \valconf, \bufconf)\in\eventstatesOf{\memmodel}$ is a tuple where
the program counter $\pcconf\colon\threaddomain\rightarrow\controlstates$ holds the current control state of each thread,
the valuation $\valconf\colon\regdomain\,\cup\,\addrdomain\rightarrow\datadomain$ holds the values stored in registers and at memory addresses,
and the buffer configuration $\bufconf\colon\threaddomain\rightarrow(\addrdomain\times\datadomain)^*$
holds a sequence of address-value pairs.

In the \emph{initial state} $\initialeventstate := (\initpcconf, \initvalconf, \initbufconf)$,
the program counter holds the initial control states, $\initpcconf(\athread):=\initialcontrolstateOf{\athread}$ for all $\athread\in\threaddomain$,
all registers and addresses contain value $0$,
and all buffers are empty, $\initbufconf(\athread):=\varepsilon$ for all $t\in\threaddomain$.

The transition relation $\eventtransitions{\tso}$ for $\tso$ satisfies the rules given in Figure~\ref{Figure:TSORules}.
There are two more rules for register assignments and conditionals that are standard and omitted.
TSO architectures implement (FIFO) store buffering, which means stores are buffered for later execution on the shared memory.
Loads from an address $\anaddr$ take their value from the most recent store to address $\anaddr$ that is buffered.
If there is no such buffered store, they access the main memory.
This is modelled by the Rules~({LB}) and~({LM}).
Rule~({ST}) enqueues store operations as address-value pairs to the buffer.
Rule~({MEM}) non-deterministically dequeues store operations and executes them on memory.
Rule~({F}) states that a thread can execute a fence only if its buffer is empty.
As can be seen from Figure~\ref{Figure:TSORules}, events labelling TSO transitions take the form
$\events\subseteq\threaddomain\times(\instructions\cup\set{\writekind})\times(\addrdomain\cup\set{\bot})$.

\begin{figure}[!t]
\centering
\therules{
	\hspace{-1em}
	\therule{ %
		\acommand\,= \theload{\areg}{\anexpr_\anaddr}
		\quad$\projection{\bufconf(\athread)}{(\set{\anaddr}\times\datadomain)} = ({\anaddr},{\aval})\cdot\beta$
	}{$\astate\transitionto{(\athread,\aninstruction,\anaddr)} (\pcconf',\valconf[\areg:=\aval],\bufconf)$}
	{LB}
	\hspace{-1em}
	\therule{ %
		\acommand\,= \theload{\areg}{\anexpr_\anaddr} 
		\quad$\projection{\bufconf(\athread)}{(\set{\anaddr}\times\datadomain)} = \varepsilon$
	}{$\astate\transitionto{(\athread,\aninstruction,\anaddr)} (\pcconf',\valconf[\areg:=\valconf(\anaddr)],\bufconf)$}
	{LM}
	\hspace{-1em}
	\therule{ %
		\acommand\,= \thestore{\anexpr_\anaddr}{\anexpr_\aval} 
	}{$\astate\transitionto{(\athread,\aninstruction,\anaddr)}
		(\pcconf',\valconf,\bufconf[\athread:=(\anaddr,\aval)\cdot\bufconf(\athread)])$}
	{ST}
	\hspace{-1em}
	\doublerule{ %
		$\bufconf(\athread) = \beta\cdot(\anaddr,\aval)$
	}{$\astate\transitionto{(\athread,\writekind,\anaddr)}(\pcconf,\valconf[\anaddr:=\aval],\bufconf[\athread:=\beta])$}
	{ %
		\acommand\,= \themfence{}
		\quad$\bufconf(\athread)=\varepsilon$
	}{$\astate\transitionto{(\athread,\aninstruction,\bottom)}(\pcconf',\valconf,\bufconf)$}
	{MEM}{F}
}
\vskip -1em
\caption{Transition rules for $\eventautomaton{\tso}{\aprogram}$ 
	assuming $\astate=(\pcconf,\valconf,\bufconf)$ 
	with $\pcconf(\athread)=\acontrolstate$ and $\aninstruction=(\acontrolstate, \acommand, \acontrolstate')$ in thread $\athread$.
The program counter is always set to $\pcconf'=\pcconf[\athread:=\acontrolstate']$.
We assume $\anaddr=\valOf{\anexpr_\anaddr}$ to be the address returned by an address expression $\anexpr_\anaddr$ and $\aval=\valOf{\anexpr_{\aval}}$ the value returned by a value expression $\anexpr_{\aval}$.
	We use $\projection{\bufconf(\athread)}{(\set{\anaddr}\times\datadomain)}$ to project the buffer content $\bufconf(\athread)$ to store operations that access address $\anaddr$.}
\label{Figure:TSORules}
\end{figure}

The \seqcon~\cite{Lamport79} semantics is simpler than \tso\ in that stores are not buffered.
Technically, we keep the set of states but change the transitions so that Rule~({ST}) is immediately followed by Rule~(MEM).

We are interested in the \emph{computations} of program $\aprogram$ under $\memmodel\in\{\tso, \seqcon\}$.
They are given by
$\computationsOf{\memmodel}{\aprogram}:=\langFinOf{\finalstates}{\eventautomaton{\memmodel}{\aprogram}}$,
where $\finalstates$ is the set of states with empty buffers.
With this choice of final states, we avoid incomplete computations that have pending stores.
Note that all \seqcon\ states have empty buffers, which means the \seqcon{} computations form a subset of the \tso{} computations:
$\computationsOf{\seqcon}{\aprogram}\subseteq\computationsOf{\tso}{\aprogram}$.
We will use notation $\reachOf{\memmodel}{\aprogram}$ for the set of all states $\astate\in\finaleventstates$
that are reachable by some computation in $\computationsOf{\memmodel}{\aprogram}$.

To give an example, the program from Figure~\ref{Figure:Dekker} admits the \tso\ computation $\tau_\wit$ below
where the store of the first thread is flushed at the end:
\begin{align*}
\tau_\wit\ &=\ \one{\storeevent}\cdot\one{\loadevent}\cdot
	\two{\storeevent}\cdot\two{\flushevent}\cdot\two{\loadevent}\cdot
	\one{\flushevent}.
\end{align*}

Consider an event $\anevent=(\athread,\aninstruction,\anaddr)$.
By $\threadOf{\anevent}:=\athread$ we refer to the thread that produced the event.
Function $\instructionOf{\anevent}:=\aninstruction$ returns the instruction.
For flush events, $\instructionOf{\anevent}$ gives the instruction of the matching store event.
By $\addrOf{\anevent}:=\anaddr$ we denote the address that is accessed (if any).
In the example, $\threadOf{\one{\storeevent}}=\athread_1$,
$\instructionOf{\one{\storeevent}}=\acontrolstateOf{0,1}\transitionto{\thestore{\xaddr}{1}}\acontrolstateOf{1,1}$, and $\addrOf{\one{\storeevent}}=\xaddr$.
\section{Lazy TSO Reachability}\label{Section:LazyReachability}
We introduce the reachability problem and present our main contribution:
an algorithm that checks \tso{} reachability lazily.
The iterative algorithm queries an oracle to identify sequences of instructions that, under the \tso{} semantics,
lead to states not reachable under \seqcon.
In Section~\ref{Subsection:SoundAndComplete}, we show that the algorithm yields a sound and complete semi-decision procedure.

Given a memory model $\memmodel\in\set{\seqcon,\tso}$,
the $\memmodel$ reachability problem expects as input a program $P$ and a set of \emph{goal states} $\goalstates\subseteq\eventstatesOf{\memmodel}$.
We are mostly interested in the control state of each thread. Therefore, goal states $(\pcconf, \valconf, \bufconf)$ typically specify a program counter $\pcconf$ but leave the memory valuation unconstrained.
Formally, the \emph{\memmodel{} reachability problem} asks if some state in $\goalstates$ is reachable in the automaton
$\eventautomaton{\memmodel}{\aprogram}$.
\begin{quote}
{\bf Given:} A parallel program $\aprog$ and goal states $\goalstates$.\\
{\bf Problem:} Decide $	\langFinOf{\finalstates \cap \goalstates}{\eventautomaton{\memmodel}{\aprogram}}\ \neq\ \emptyset$.
\end{quote}
We use notation $\reachOf{\memmodel}{\aprogram}\cap\goalstates$ for the set of reachable final goal states in $\aprogram$.

Instead of solving reachability under \tso{} directly, the algorithm we propose solves \seqcon{} reachability and,
if no goal state is reachable, tries to lazily introduce store buffering on a certain control path of the program.
The algorithm delegates choosing the control path to an \emph{oracle function $\oracle$}.
Given an input program $\bprog$, the oracle returns a sequence of instructions $\instructions^*$ in that program.
Formally, the oracle satisfies the following requirements:
\begin{itemize}
\item If $\oracle(\bprog)=\varepsilon$ then $\reachOf{\seqcon}{\bprog}=\reachOf{\tso}{\bprog}$.
\item Otherwise, $\oracle(\bprog)=\aninstruction_1\aninstruction_2\ldots\aninstruction_n$ with $\commandOf{\aninstruction_1}$ a store, $\commandOf{\aninstruction_n}$ a load, $\commandOf{\aninstruction_i}\neq\themfence$,
and $\destinationstateOf{\aninstruction_i}=\sourcestateOf{\aninstruction_{i+1}}$ for $i\in\intrange{1}{n-1}$.
\end{itemize}

The lazy \tso{} reachability checker is outlined in Algorithm~\ref{Algorithm:Reachability}.
As input, it takes a program $\aprog$ and an oracle $\oracle$.
We assume some control states in each thread to be marked to define a set of goal states.
The algorithm  returns \mytrue{} iff the program can reach a goal state under \tso.
It works as follows.
First, it creates a copy $\bprog$ of the program $\aprog$.
Next, it checks if a goal state is \seqcon{}-reachable in $\bprog$ (Line~\ref{Algorithm:SCReachabilityCheck}).
If that is the case, the algorithm returns \mytrue{}.
Otherwise, it asks the oracle $\oracle$ where in the program to introduce store buffering.
If $\oracleOf{\bprog}\neq\varepsilon$, the algorithm extends $\bprog$ to emulate store buffering on the path $\oracleOf{\bprog}$ under \seqcon{}
(Line~\ref{Algorithm:Extend}).
Then it goes back to the beginning of the loop.
If $\oracleOf{\bprog}=\varepsilon$, by the first property of oracles, $\bprog$ has the same reachable states under \seqcon{} and under \tso{}.
This means the algorithm can safely return \myfalse{} (Line~\ref{Algorithm:NotReachable}).
Note that, since $\bprog$ emulates \tso{} behavior of $\aprog$, the algorithm solves \tso{} reachability for $\aprog$.

\begin{center}
	\begin{algorithm}
	\caption{Lazy \tso{} reachability Checker\label{Algorithm:Reachability}}
	\begin{algorithmic}[1]
	\Statex {\bf Input}: Marked program $\aprog$ and oracle $\oracle$
	\Statex {\bf Output}: \mytrue\ if some goal state is \tso{}-reachable in $\aprog$
	\Statex \hspace{40pt} \myfalse\ if no goal state is \tso{}-reachable in $\aprog$\vspace{0.2cm}
	\State $\bprog:=\aprog$;\label{Algorithm:MakeACopy}
	\State {\bf while} \mytrue{} {\bf do}
		\State\quad {\bf if} $\reachOf{\seqcon}{\aprogram}\cap\goalstates\neq\emptyset$ {\bf then}
		\Comment{check if some goal state is \seqcon{}-reachable}\label{Algorithm:SCReachabilityCheck}
			\State\qquad {\bf return} \mytrue; \label{Algorithm:Reachable}
		\State\quad {\bf else}
			\State\qquad $\reowit:=\oracleOf{\bprog}$;
			\Comment{ask the oracle where to use store buffering}
			\State\qquad {\bf if} $\reowit\neq\varepsilon$ {\bf then}
			\label{Algorithm:RobustnessCheck}
			\State\quad\qquad $\bprog:=\bprog\glue\reowit$; \label{Algorithm:Extend}
		\State\qquad {\bf else } \label{Algorithm:Robust}
			\State\quad\qquad {\bf return} \myfalse; \label{Algorithm:NotReachable}
	\end{algorithmic}
	\end{algorithm}
\end{center}

Let $\reowit:=\oracleOf{\bprog}=\aninstruction_1\aninstruction_2\ldots\aninstruction_n$
and let $\attackthread :=(
  \commands_{\attackthread},
  \controlstates_{\attackthread},
  \instructions_{\attackthread},
  \initialcontrolstateOf{\attackthread})$
be the thread of the instructions in $\reowit$.
The modified program $\bprog\glue\reowit$
replaces $\attackthread$ by a new thread $\attackthread\glue\reowit$.
The new thread emulates under \seqcon{} the \tso{} semantics of $\reowit$.
Formally, the \emph{extension of $\athread$ by $\reowit$} is $\attackthread\glue\reowit:=(
  \commands_{\attackthread}',
  \controlstates_{\attackthread}',
  \instructions_{\attackthread}',
  \initialcontrolstateOf{\attackthread})$.
The thread is obtained from $\attackthread$ by adding sequences of instructions starting from $\extensionstate{0}:=\sourcestateOf{\aninstruction_1}$.
To remember the addresses and values of the buffered stores,
we use auxiliary registers $\adrreg_{1},\ldots,\adrreg_{\bound}$ and $\valreg_{1},\ldots,\valreg_{\bound}$,
where $\bound\leq n-1$ is the total number of store instructions in $\reowit$.
The sets $\commands_{\attackthread}'\supseteq\commands_{\attackthread}$
and $\controlstates_{\attackthread}'\supseteq\controlstates_{\attackthread}$ are extended as necessary.

We define the extension by describing the new transitions that are added to $\instructions_{\attackthread}'$ for each $\aninstruction_i$.
In our construction, we use a variable $\iterator$ to keep track of the number of store instructions already processed.
Initially, $\controlstates_{\attackthread}':=\controlstates_{\attackthread}$ and $\iterator:=0$.
Based on the type of instructions, we distinguish the following cases.

If $\commandOf{\aninstruction_i}=\thestore{\anexpr}{\anexpr'}$, we increment $\iterator$ by $1$
and add instructions that remember the address and the value being written in $\adrreg_\iterator$ and $\valreg_\iterator$.

If $\commandOf{\aninstruction_i}=\theload{\areg}{\anexpr}$, we add instructions to $\instructions_{\attackthread}'$
that perform a load from memory only when a load from the simulated buffer is not possible.
More precisely, if $j\in[1,\iterator]$ is found so that $\adrreg_j=\anexpr$,
register $\areg$ is assigned the value of $\valreg_j$.
Otherwise, $\areg$ receives its value from the address indicated by $\anexpr$.
\begin{center}
	\begin{tikzpicture}[->,>=stealth',shorten >=1pt, auto, node distance=2.25cm, transform shape, scale=1]%
		\tikzstyle{every state}=[fill=white,circle,draw=black,inner sep=0pt,text=black,minimum size=6pt]
		\node[state,label=left:$\extensionstate{i-1}$] (qi-1) {};
		\node[state] (fresh1) [node distance=3.25cm,right of=qi-1] {};
		\node[state] (fresh2) [right of=fresh1] {};
		\node (dots1) [node distance=1.125cm,right of=fresh1] {$\cdots$};		
		\node[state] (fresh3) [node distance=2.75cm,right of=fresh2] {};
		\node[state] (fresh4) [node distance=1.25cm,below of=fresh2] {};
		\node[state] (fresh5) [node distance=2cm,below of=qi-1] {};
		\node (dots2) [node distance=1.5cm,below of=dots1] {$\cdots$};
		\node[state,label=right:$\extensionstate{i}$] (qi) [right of=fresh3] {};
		\path	(qi-1)		edge	node {$\theassume{\adrreg_\iterator\neq\anexpr}$} (fresh1)
				(fresh2)	edge 	node {$\theassume{\adrreg_1\neq\anexpr}$} (fresh3)
				(fresh3)	edge	node {$\theload{\areg}{\anexpr}$} (qi)
				(fresh2)	edge 	node {$\theassume{\adrreg_1=\anexpr}$} (fresh4)
				(qi-1)		edge node [] {$\theassume{\adrreg_\iterator=\anexpr}$} (fresh5); %
		\draw (fresh4.east) .. controls +(0:2) and +(210:1) .. (qi.south west) node[below,pos=.15] {$\thelocal{\areg}{\valreg_1}$};
		\draw (fresh5.east) .. controls +(0:4) and +(235:2.5) .. (qi.south) node[below,pos=.1] {$\thelocal{\areg}{\valreg_\iterator}$};
	\end{tikzpicture}
\end{center}
\vskip -.75em

If $\commandOf{\aninstruction_i}$ is an assignment or a conditional,
we add $\atransition{\extensionstate{i-1}}{\extensionstate{i}}{\commandOf{\aninstruction_i}}$ to $\instructions_{\attackthread}'$.
By the definition of an oracle, $\commandOf{\aninstruction_i}$ is never a fence.

The above cases handle all instructions in $\reowit$.
So far, the extension added new instructions to $\instructions_{\attackthread}'$ that lead through the fresh states $\extensionstate{1},\ldots,\extensionstate{n}$.
Out of control state $\extensionstate{n}$, we now recreate the sequence of stores remembered by the auxiliary registers.
Then we return to the control flow of the original thread $\attackthread$.
\begin{center}
	\begin{tikzpicture}[->,>=stealth',shorten >=1pt, auto, node distance=3.25cm, transform shape, scale=1]%
		\tikzstyle{every state}=[fill=white,circle,draw=black,inner sep=0pt,text=black,minimum size=6pt]
		\node[state,label=left:$\extensionstate{n}$] (qn) {};
		\node[state] (fresh1) [node distance=2.75cm,right of=qn] {};
		\node (dots) [node distance=.6cm,right of=fresh1] {$\cdots$};
		\node[state] (fresh2) [node distance=.6cm,right of=dots] {};
		\node[state,label=right:$\destinationstateOf{\aninstruction_n}$] (q) [right of=fresh2] {};
		\path	(qn)		edge	node {$\thestore{\adrreg_1}{\valreg_1}$} (fresh1)
				(fresh2)	edge 	node {$\thestore{\adrreg_\bound}{\valreg_\bound}$} (q);
	\end{tikzpicture}
\end{center}
\vskip -.75em

Next, we remove $\aninstruction_1$ from the program.
This prevents the oracle from discovering in the future another instruction sequence that is essentially the same as $\reowit$.
As we will show, this is key to guaranteeing termination of the algorithm for acyclic programs.
However, the removal of $\aninstruction_1$ may reduce the set of \tso{}-reachable states.
To overcome this problem, we insert additional instructions.
Consider an instruction $\aninstruction\in\instructions_{\attackthread}$ with $\sourcestateOf{\aninstruction}=\sourcestateOf{\aninstruction_i}$
for some $i\in\intrange{1}{n}$ and assume that $\aninstruction\neq\aninstruction_i$.
We add instructions that recreate the stores buffered in the auxiliary registers and return to $\destinationstateOf{\aninstruction}$.
\begin{center}
	\begin{tikzpicture}[->,>=stealth',shorten >=1pt, auto, node distance=3.5cm, transform shape, scale=1]%
		\tikzstyle{every state}=[fill=white,circle,draw=black,inner sep=0pt,text=black,minimum size=6pt]
		\node[state,label=left:$\extensionstate{i}$] (qi) {};
		\node[state] (fresh1) [node distance=2.75cm,right of=qi] {};
		\node (dots) [node distance=.6cm,right of=fresh1] {$\cdots$};
		\node[state] (fresh2) [node distance=.6cm,right of=dots] {};
		\node[state] (fresh3) [right of=fresh2] {};
		\node[state,label=right:$\destinationstateOf{\aninstruction}$] (q) [node distance=2cm,right of=fresh3] {};
		\path	(qi)		edge	node {$\thestore{\adrreg_1}{\valreg_1}$} (fresh1)
				(fresh2)	edge 	node {$\thestore{\adrreg_\iterator}{\valreg_\iterator}$} (fresh3)
				(fresh3)	edge 	node {$\commandOf{\aninstruction}$} (q);
	\end{tikzpicture}
\end{center}
\vskip -.75em

Similarly, for all load instructions $\aninstruction_i$
as well as out of $\extensionstate{1}$
we add instructions that flush and fence the pair $(\adrreg_1,\valreg_1)$,
make visible the remaining buffered stores,
and return to state $\acontrolstate$ in the original control flow.
Below, $\acontrolstate:=\sourcestateOf{\aninstruction_i}$ if $\aninstruction_i$ is a load and $\acontrolstate:=\destinationstateOf{\aninstruction_1}$, otherwise.
Intuitively, this captures behaviors that delay $\aninstruction_1$ past loads earlier than $\aninstruction_n$, and that do not delay $\aninstruction_1$ past the first load in $\reowit$.
\begin{center}
	\begin{tikzpicture}[->,>=stealth',shorten >=1pt, auto, node distance=3.5cm, transform shape, scale=1]%
		\tikzstyle{every state}=[fill=white,circle,draw=black,inner sep=0pt,text=black,minimum size=6pt]
		\node[state,label=left:$\extensionstate{i}$] (qi) {};
		\node[state] (fresh1) [node distance=2.75cm,right of=qi] {};
		\node[state] (fresh2) [node distance=1.75cm,right of=fresh1] {};
		\node (dots) [node distance=.6cm,right of=fresh2] {$\cdots$};
		\node[state] (fresh3) [node distance=.6cm,right of=dots] {};
		\node[state,label=right:$\acontrolstate$] (q) [node distance=3.5cm,right of=fresh3] {};
		\path	(qi)		edge	node {$\thestore{\adrreg_1}{\valreg_1}$} (fresh1)
				(fresh1)	edge 	node [yshift=2pt] {$\themfence$} (fresh2)
				(fresh3)	edge 	node {$\thestore{\adrreg_\iterator}{\valreg_\iterator}$} (q);
	\end{tikzpicture}
\end{center}

Figure~\ref{Figure:Instrumented} shows the extension of the program in Figure~\ref{Figure:Dekker} by the instruction sequence
$\instructionOf{\one{\storeevent}}\cdot\instructionOf{\one{\loadevent}}:=
\acontrolstateOf{0,1}\transitionto{\thestore{\xaddr}{1}}\acontrolstateOf{1,1}\transitionto{\theload{\areg_1}{\yaddr}}\acontrolstateOf{1,2}$.

\begin{figure}[!t]
	\centering\small\vskip -1em
	\begin{tikzpicture}[->,>=stealth',shorten >=1pt, auto, node distance=1cm, transform shape, scale=1,baseline=.65cm]%
		\tikzstyle{every state}=[fill=white,circle,draw=black,inner sep=0pt,text=black,minimum size=6pt]
		\begin{scope} %
			\node[state,initial,initial above,initial text=$\athread_1$,label=left:$\acontrolstate_{0,1}$] (q10) {};
			\node[state,label=left:$\acontrolstate_{1,1}$] (q11) [below of=q10] {};
			\node[state,label=left:$\acontrolstate_{2,1}$] (q12) [below of=q11] {};
			\node[accepting,state,label=left:$\finalcontrolstateOf{1}$] (q13) [below of=q12] {};
			\path	(q11)		edge 	node [swap] {$\theload{r_1}{\yaddr}$} (q12)
					(q12)		edge	node [swap] {$\theassume{r_1\hspace{-.2em}=\hspace{-.1em}0}$} (q13);
					
			\node[state] (qstore) [node distance=2cm,right of=q10] {};
			\node[state,label=above:$\extensionstate{1}$] (q1) [node distance=2cm,right of=qstore] {};
			\node[state] (qbuf) [below of=q1] {};
			\node[state] (qmem) [node distance=2.75cm,right of=q1] {};
			\node[state,label=below:$\extensionstate{2}$] (q2) [below of=qbuf] {};

			\path (q10)		edge node {$\thelocal{\adrreg_1}{\xaddr}$} (qstore)
					(qstore)	edge node {$\thelocal{\valreg_1}{1}$} (q1)
					(q1)		edge node {$\theassume{\adrreg_1\neq\yaddr}$} (qmem)
					(q1)		edge node {$\theassume{\adrreg_1=\yaddr}$} (qbuf)
					(qbuf)	edge node {$\thelocal{\areg_1}{\valreg_1}$} (q2)
					(q2)		edge node {$\thestore{\adrreg_1}{\valreg_1}$} (q12);
			\draw (qmem.south) .. controls +(270:.5) and +(0:2.5) .. (q2.east) node[pos=.75,xshift=-10pt,yshift=0pt] {$\theload{\areg_1}{\yaddr}$};

			\node[state] (qfence) [node distance=1.5cm,right of=q11] {};
			\path (q1)		edge node [below,sloped] {$\thestore{\adrreg_1}{\valreg_1}$} (qfence)
					(qfence)	edge node {$\themfence$} (q11);
		\end{scope}
		\begin{scope} %
			\node[state,initial,initial above,initial text=$\athread_2$,label=right:$\acontrolstate_{0,2}$] (q20) [node distance=2.5cm,right of=qmem] {};
			\node[state,label=right:$\acontrolstate_{1,2}$] (q21) [below of=q20] {};
			\node[state,label=right:$\acontrolstate_{2,2}$] (q22) [below of=q21] {};
			\node[accepting,state,label=right:$\finalcontrolstateOf{2}$] (q23) [below of=q22] {};
			\path	(q20)		edge	node [swap] {$\thestore{\yaddr}{1}$} (q21)
					(q21)		edge 	node [swap] {$\theload{r_2}{\xaddr}$} (q22)
					(q22)		edge	node [swap] {$\theassume{r_2\hspace{-.2em}=\hspace{-.1em}0}$} (q23);
		\end{scope}	
	\end{tikzpicture}
\caption{Extension by $\instructionOf{\one{\storeevent}}\cdot\instructionOf{\one{\loadevent}}$ of the program in Figure~\ref{Figure:Dekker}. 
Goal state $(\pcconf,\valconf,\bufconf)$ with $\valconf(\xaddr)=\valconf(\yaddr)=1$ and $\valconf(\areg_1)=\valconf(\areg_2)=0$ is now \seqcon{}-reachable.}
\label{Figure:Instrumented}
\end{figure}

\subsection{Soundness and Completeness}\label{Subsection:SoundAndComplete}
We show that Algorithm~\ref{Algorithm:Reachability} is a decision procedure for acyclic programs.
From here until (inclusively) Theorem~\ref{Theorem:Correctness} we assume that all programs are acyclic,
i.e.,\ their instructions and control states form directed acyclic graphs.
Theorem~\ref{Theorem:PartialCorrectness} then explains how Algorithm~\ref{Algorithm:Reachability} yields a semi-decision procedure for all programs.

We first prove the extension sound and complete (Lemma~\ref{Lemma:Extension}):
extending $\bprog$ by sequence $\reowit:=\oracleOf{\bprog}$ does neither add nor remove \tso{}-reachable states.
Afterwards, Lemma~\ref{Lemma:NewWitnessSupport} shows that
if Algorithm~\ref{Algorithm:Reachability} extends $\bprog$ by $\reowit$ (Line~\ref{Algorithm:Extend}) then,
in subsequent iterations of the algorithm, no new sequence returned by the oracle is the same as $\reowit$ (projected back to $\aprog$).
Next, by the first condition of an oracle and using Lemma~\ref{Lemma:NewWitnessSupport}, we establish that
Algorithm~\ref{Algorithm:Reachability} is a decision procedure for acyclic programs (Theorem~\ref{Theorem:Correctness}).
Finally, we show that Algorithm~\ref{Algorithm:Reachability} can be turned into a semi-decision procedure for all programs using a bounded model checking approach
(Theorem~\ref{Theorem:PartialCorrectness}).

\begin{alemma}\label{Lemma:Extension}
Let $\addrdomain\cup\regdomain$ be the addresses and registers of program $\bprog$ and let $\reowit:=\oracleOf{\bprog}$.
Then we have
$(\pcconf,\valconf,\bufconf)\in\reachOf{\tso}{\bprog}$ if and only if
$(\pcconf,\valconf',\bufconf)\in\reachOf{\tso}{\bprog\glue\reowit}$ with
$\valconf(\anaddr)=\valconf'(\anaddr)$ for all $\anaddr\in\addrdomain\cup\regdomain$.
\end{alemma}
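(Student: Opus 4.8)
The plan is to prove both directions of the equivalence by a step-by-step simulation between the TSO computations of $\bprog$ and those of $\bprog\glue\reowit$, matching computations that reach the same final configuration up to the values of the fresh auxiliary registers $\adrreg_1,\ldots,\adrreg_\bound,\valreg_1,\ldots,\valreg_\bound$. Since the two programs agree on every thread other than $\attackthread$, and the memory, the original registers, and the buffers of all other threads evolve identically, I would fix a relation that is the identity on every component except the control state and the buffer of $\attackthread$, and prove by induction on the length of the computation that it is preserved by every rule of Figure~\ref{Figure:TSORules}. The fresh control states $\extensionstate{1},\ldots,\extensionstate{n}$ are transient: a computation only rests with empty buffer at an original control state once all emulated stores have been recreated, so the $\pcconf$ quantified in the statement always ranges over original control states and the comparison is well posed.

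The heart of the argument is the relation for thread $\attackthread$. The auxiliary registers encode a delayed-store buffer: when $\bprog\glue\reowit$ has emulated the prefix $\aninstruction_1\ldots\aninstruction_i$ and reached $\extensionstate{i}$ with $\iterator=c$, the pairs $(\valconf(\adrreg_c),\valconf(\valreg_c)),\ldots,(\valconf(\adrreg_1),\valconf(\valreg_1))$ stand for exactly the stores that $\aninstruction_1\ldots\aninstruction_i$ pushed onto the buffer in $\bprog$ but did not yet flush. I would relate a configuration of $\attackthread$ at control state $\acontrolstate$ with buffer $\beta$ to a configuration of $\attackthread\glue\reowit$ whose register-encoded content, stacked on top of its real buffer, equals $\beta$ in the same FIFO order. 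Under this relation Rule~(ST) applied to $\aninstruction_i$ corresponds to the assignments that fill $\adrreg_c,\valreg_c$; the emulated load matches Rule~(LB) when a register check succeeds; and each recreating instruction $\thestore{\adrreg_j}{\valreg_j}$, together with its eventual (MEM) flush, corresponds to the single (MEM) flush of the matching buffered store. Because stores are recorded with increasing index and recreated starting from $\adrreg_1$, FIFO order is respected on both sides. A point worth highlighting is that a register check failing on all of $\adrreg_c,\ldots,\adrreg_1$ falls through to a genuine load instruction, which under TSO reads from $\attackthread$'s real buffer (the stores already present at $\extensionstate{0}$) by Rule~(LB) or from memory by Rule~(LM); thus the combined buffer --- emulated stores over real buffer over memory --- is modelled correctly even when $\attackthread$ enters $\extensionstate{0}$ with a non-empty buffer.

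For the direction that no state is added, I would take a TSO computation of $\bprog\glue\reowit$, read off the delay pattern that its emulation imposes on the recorded stores, and replay it in $\bprog$ using the real buffer of $\attackthread$: the register assignments and emulated loads become load/internal steps, while each recreating store composed with its later flush becomes a buffered-then-flushed store of $\attackthread$. The only subtlety is that a recreated store may itself linger in the real buffer and be delayed past subsequent instructions; this merely realises a delay that the original store already admits in $\bprog$, so no new memory valuation arises.

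The direction that no state is removed is where I expect the \emph{main obstacle}, because $\aninstruction_1$ has been deleted and every delay pattern of the store $\aninstruction_1$ available in $\bprog$ under TSO must be recovered. I would classify a TSO computation of $\bprog$ by the moment at which $\aninstruction_1$ is flushed relative to the loads of $\reowit$ and match each class to one family of added transitions: the main path to $\extensionstate{n}$ when $\aninstruction_1$ is delayed past every load of $\reowit$ (any further delay being taken over by the real buffer after $\destinationstateOf{\aninstruction_n}$); the flush-and-fence paths out of a load state or out of $\extensionstate{1}$ when $\aninstruction_1$ is committed just before some load or not delayed at all; and the recreation paths when the computation leaves the predicted path through some $\aninstruction\neq\aninstruction_i$. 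The crux is to argue that these families are jointly exhaustive: using that the buffer is FIFO, delaying $\aninstruction_1$ past a load forces every later buffered store of $\attackthread$ past that load as well, and any flush of $\aninstruction_1$ occurring strictly between two stores is observationally equivalent to a flush at the neighbouring load boundary; hence it suffices to provide one representative path per load-delay pattern, which is exactly what the construction does. Checking this case-completeness, together with the FIFO bookkeeping between the register-encoded stores and the stores already present in $\attackthread$'s buffer at $\extensionstate{0}$, is the main work; the remaining accounting for unchanged threads and register valuations is routine.
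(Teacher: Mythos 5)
Your proposal is correct and follows essentially the same route as the paper's proof: a bidirectional simulation by induction over computation prefixes, with an invariant that is the identity off thread $\attackthread$ and relates $\attackthread$'s buffer in $\bprog$ to the register-encoded pairs $(\adrreg_j,\valreg_j)$ stacked onto the real buffer in $\bprog\glue\reowit$ (the paper's invariants \textbf{Inv-0}--\textbf{Inv-3}), together with the same case split over the main path, the recreation paths, and the flush-and-fence exit paths. Your ``observational equivalence'' step --- moving a flush of $\aninstruction_1$ to the neighbouring load boundary to argue the added paths are jointly exhaustive --- is precisely what the paper isolates as a separate normalization lemma (flush events of $\attackthread$ can be delayed locally the least while preserving the happens-before relation, and hence the reached state), so your completeness argument coincides with the paper's.
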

Let $\athread$ be the thread that differs in $\bprog$ and $\bprog\glue\reowit$.
To prove Lemma~\ref{Lemma:Extension}, one can show that
for any prefix $\acomp'$ of $\acomp\in\computationsOf{\tso}{\bprog}$
there is a prefix $\bcomp'$ of $\bcomp\in\computationsOf{\tso}{\bprog\glue\reowit}$, and vice versa, that maintain the following invariants.
\begin{asparaitem}
\item[{\bf Inv-0}] $\initialeventstate\transitionto{\acomp'}(\pcconf,\valconf,\bufconf)$
and $\initialeventstate\transitionto{\bcomp'}(\pcconf',\valconf',\bufconf')$.\vspace{0.1cm}
\item[{\bf Inv-1}] If $\pcconf$ and $\pcconf'$ differ, they only differ for thread $\attackthread$.
	If $\pcconf(\attackthread)\neq\pcconf'(\attackthread)$, then
	$\pcconf(\attackthread)=\destinationstateOf{\aninstruction_i}$ and $\pcconf'(\attackthread)=\extensionstate{i}$ for some $i\in\intrange{1}{n-1}$.\vspace{0.1cm}
\item[{\bf Inv-2}] $\valconf'(\anaddr)=\valconf(\anaddr)$ for all $\anaddr\in\addrdomain\cup\regdomain$.\vspace{0.1cm}
\item[{\bf Inv-3}] $\bufconf$ and $\bufconf'$ differ at most for $\attackthread$.
	If $\bufconf(\attackthread)\neq\bufconf'(\attackthread)$, then
	$\pcconf'(\attackthread)=\extensionstate{i}$ for some $i\in\intrange{1}{n-1}$ and
	$\bufconf(\attackthread)=({\valOf{\adrreg_\iterator}},{\valOf{\valreg_\iterator}})
		\cdots({\valOf{\adrreg_1}},{\valOf{\valreg_1}})\cdot\bufconf'(\attackthread)$
	where $\iterator$ stores are seen along $\reowit$ from $\sourcestateOf{\aninstruction_1}$ to $\destinationstateOf{\aninstruction_i}$.\vspace{0.2cm}
\end{asparaitem}

\lemmaatend
Prior to proving Lemma~\ref{Lemma:Extension} we do a bit of preparation. 
We rely on computations that delay flush events locally the least.
Lemma~\ref{Lemma:NoUselessDelays} explains what this means.
\begin{alemma}\label{Lemma:NoUselessDelays}
Let $\acomp\in\computationsOf{\tso}{\bprog}$ and $\attackthread\in\threaddomain$. 
There exists $\ddot{\acomp}\in\computationsOf{\tso}{\bprog}$ such that 
$\happensbeforeOf{\acomp}=\,\happensbeforeOf{\ddot{\acomp}}$ and, 
for all events $\anevent_\storeevent\equivalenceorder\anevent_\flushevent$ within thread $\attackthread$, 
if $\projection{\ddot{\acomp}}{\attackthread}:=
  \acomp_\text{prefix}\cdot\anevent_\storeevent\cdot\acomp'\cdot\anevent_\flushevent\cdot\acomp_\text{suffix}$ then either 
\begin{asparaitem}
\item[(1)] $\acomp':=\bcomp\cdot\anevent_\loadevent\cdot\bcomp'$ and all events $\anevent\in\bcomp'$ are flushes,
\item[{\hspace{-14pt}or (2)}] all events $\anevent\in\acomp'$ are local assignments or conditionals
\end{asparaitem}
\end{alemma}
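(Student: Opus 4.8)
The plan is to obtain $\ddot{\acomp}$ from $\acomp$ by a sequence of $\hb$-preserving local rearrangements that push the flush events of thread $\attackthread$ as far left as possible while deferring its assignments and conditionals as far right as possible. Since no events are added or removed, and the happens-before relation is fixed once we know program order, coherence, and the store each load reads from (buffer or memory), it suffices to argue that every single rearrangement is a swap of two adjacent, independent events that changes neither a reads-from edge nor the coherence order; then $\happensbeforeOf{\acomp}=\,\happensbeforeOf{\ddot{\acomp}}$ follows. I would therefore classify the admissible swaps. Moving a flush $\anevent_\flushevent$ of a store $\anevent_\storeevent$ to address $\anaddr$ one step to the left is sound and $\hb$-neutral whenever the preceding event is independent of it: a store issue (which only appends to the buffer), a fence or local event, a load or flush on another address, or any event of a different thread not touching $\anaddr$. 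It is blocked in exactly four situations: (i) the matching store $\anevent_\storeevent$ itself; (ii) a flush of an earlier buffered store of $\attackthread$ (FIFO order, i.e.\ coherence); (iii) a conflicting access to $\anaddr$ in another thread that $\hb$-precedes $\anevent_\flushevent$; and (iv) a same-thread load of $\anaddr$ reading the buffered value of $\anevent_\storeevent$. Symmetrically, an assignment or conditional of $\attackthread$ may be moved right past flushes, cross-thread events, and store issues, which is always $\hb$-neutral because such an event reads and writes only thread-local registers and carries no data dependence to them.

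Applying the leftward flush-moves and rightward local-moves exhaustively terminates, since the flush events of $\attackthread$ only ever travel left and its local events only ever travel right; ordering computations lexicographically by the total position of the flushes (to be decreased) and then by the total position of the local events (to be increased), both of which are bounded, gives a well-founded measure that every move strictly reduces. Let $\ddot{\acomp}$ be a fixpoint and fix a pair $\anevent_\storeevent\equivalenceorder\anevent_\flushevent$ with $\projection{\ddot{\acomp}}{\attackthread}=\acomp_\text{prefix}\cdot\anevent_\storeevent\cdot\acomp'\cdot\anevent_\flushevent\cdot\acomp_\text{suffix}$. By maximality the thread-$\attackthread$ event of $\acomp'$ directly preceding $\anevent_\flushevent$ must be a blocker of type (i)--(iv): types (i)--(iii) are either $\anevent_\storeevent$ itself or a flush, while (iv) is a load. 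Moreover, after the last load of $\acomp'$ no assignment, conditional, or store issue can remain, as each could be commuted past $\anevent_\flushevent$ (a store issue by a leftward flush-move, a local event by a rightward local-move), contradicting the fixpoint; hence everything after the last load is a flush, giving case~(1) with $\acomp'=\bcomp\cdot\anevent_\loadevent\cdot\bcomp'$. If instead $\acomp'$ contains no load, the same reasoning removes every store issue and flush from $\acomp'$ — a FIFO-predecessor flush, once no load pins it, can be pushed before $\anevent_\storeevent$ — leaving only assignments and conditionals, which is case~(2).

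The hard part will be blocker (iv), which is also the conceptual heart of the statement: a flush genuinely cannot be advanced past a same-thread, same-address load that reads the store's buffered value, because that would turn a buffer read into a memory read and thereby insert a reads-from edge from the flush to the load, altering $\hb$. Establishing that this is the \emph{only} intra-thread obstruction requires the careful, case-by-case verification — hinted at above — that each of the remaining swaps leaves every reads-from and coherence edge intact. In particular I expect the subtle point to be showing that a cross-thread conflict of type (iii) pins the flush only at the thread boundary and so never forces a non-flush thread-$\attackthread$ event to sit after the last load of $\acomp'$: whenever such a blocker would strand a local event there, that local event is itself free to move right past the flush, so the fixpoint still exhibits the shape of case~(1) or~(2).
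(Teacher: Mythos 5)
You have the right normalization idea, but one of your two rewrite rules is unsound, and the proof genuinely needs the case it was meant to cover. Moving an assignment or conditional of thread $\attackthread$ rightward past a \emph{same-thread store issue} is not $\hb$-neutral: $\progorder^{\attackthread}$ is defined as the order of the non-flush events of $\attackthread$ in the computation, so swapping two adjacent non-flush events of $\attackthread$ flips a $\progorder$ edge and therefore changes the happens-before relation. It is not even feasibility-preserving, since the store's address and value expressions may read the very register the assignment writes (e.g.\ $\thelocal{\areg}{1}$ followed by $\thestore{\xaddr}{\areg}$), contrary to your claim that such events ``carry no data dependence'' to stores. Note also that in your move set a store issue has no rightward move of its own; it is cleaned up only indirectly, by a leftward flush-move.

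This breaks exactly at the point you flag as subtle, blocker (iii). Take $\projection{\acomp}{\attackthread}=\anevent_\storeevent\cdot\anevent_\loadevent\cdot x\cdot s''$ followed by $\anevent_\flushevent$, where $\anevent_\loadevent$ early-reads $\anevent_\storeevent$, $x$ is an assignment whose register feeds the store issue $s''$, and an other-thread load or flush on $\addrOf{\anevent_\flushevent}$ sits immediately before $\anevent_\flushevent$. The flush-left move is pinned by the cross-thread conflict before it can ever absorb $s''$; $s''$ cannot move right (no such rule); and $x$ cannot soundly cross $s''$. Your procedure is stuck at a computation whose $\acomp'$ contains a load followed by an assignment and a store issue, violating both shape (1) and shape (2), so your fixpoint claim fails. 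The paper's proof avoids this with a single, simpler rewrite: it counts store issues among the \emph{local} events and always moves the last thread-$\attackthread$ local event $\anevent$ preceding $\anevent_\flushevent$ (with no thread-$\attackthread$ events in between) to immediately \emph{after} $\anevent_\flushevent$. That move crosses only the flush and other-thread events, so $\progorder$, data dependences, and feasibility are all preserved, stranded store issues are carried past the flush in order, and the flush itself never needs to move — making blocker (iii) irrelevant to the shape of the thread projection. If you replace your assignment-past-store swap by this rule (and drop the flush-left moves, which your argument does not actually need), your fixpoint analysis goes through essentially as in the paper.
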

\begin{proof}
Intuitively, the theorem states that flush events of thread $\attackthread$ delayed past same-thread local events, 
may be delayed less without changing the happens-before relation of the computation.
Local events are assignments, conditionals, and store events in the same thread.

Let $\acomp:=\acomp_1\cdot\anevent_\storeevent\cdot\acomp_2\cdot\anevent\cdot\acomp_3\cdot\anevent_\flushevent\cdot\acomp_4$
such that $\anevent_\storeevent\equivalenceorder\anevent_\flushevent$ are events of thread $\attackthread$,
$\anevent$ is a local event in $\attackthread$
and $\threadOf{\anevent'}\neq\attackthread$ for all events $\anevent'\in\acomp_3$.

We denote by 
$\acomp_0:=\acomp_1\cdot\anevent_\storeevent\cdot\acomp_2\cdot\acomp_3\cdot\anevent_\flushevent\cdot\anevent\cdot\acomp_4$
the \tso{} computation that first performs the flush $\anevent_\flushevent$ and then the event $\anevent$.
Notice that since $\acomp_3$ contains no events $\anevent'$ with $\threadOf{\anevent'}=\attackthread$, 
feasibility of computation $\acomp_0$ is ensured and $\happensbeforeOf{\acomp}=\,\happensbeforeOf{\acomp_0}$ holds.

Starting with the last flush event in $\acomp$, we use the above reordering of events $\anevent$ to locally delay flush events less.
In the end we obtain computation $\ddot{\acomp}$ in which no flush event of thread $\attackthread$ can be locally delayed less.
\qed
\end{proof}

Furthermore, in order to reference instructions of $\bprog\glue\reowit$ that the extension adds 
we give an alternative description for some of the transition sequences in the main text.
Recall that variable $\iterator$ keeps track of the number of store instructions processed along $\reowit$.

If $\commandOf{\aninstruction_i}=\thestore{\anexpr}{\anexpr'}$, we said $\iterator$ is incremented 
and instructions that remember the value and address written in $\adrreg_\iterator$ and $\valreg_\iterator$ are added.
\begin{align}
	\begin{tikzpicture}[->,>=stealth',shorten >=1pt, auto, node distance=2.25cm, transform shape, scale=1,baseline=-3pt]%
		\tikzstyle{every state}=[fill=white,circle,draw=black,inner sep=0pt,text=black,minimum size=6pt]
		\node[state,label=left:$\extensionstate{i-1}$] (qi-1) {};
		\node[state] (fresh) [right of=qi-1] {};
		\node[state,label=right:$\extensionstate{i}$] (qi) [right of=fresh] {};
		\path	(qi-1)		edge	node {$\thelocal{\adrreg_\iterator}{\anexpr}$} (fresh)
				(fresh)	edge 	node {$\thelocal{\valreg_\iterator}{\anexpr'}$} (qi);
	\end{tikzpicture}
	\label{Store:AddressValue}
\end{align}

If $\commandOf{\aninstruction_i}=\theload{\areg}{\anexpr}$ we said instructions are added 
that load from memory only when a load from the simulated buffer is not possible.
More precisely, if some $j\in[1,\iterator]$ such that $\adrreg_j=\anexpr$ is found, 
$\areg$ is assigned the value of $\valreg_j$.
Otherwise, the register $\areg$ receives its value from the address $\valOf{\anexpr}$.
\begin{center}
	\begin{tikzpicture}[->,>=stealth',shorten >=1pt, auto, node distance=2.25cm, transform shape, scale=1]%
		\tikzstyle{every state}=[fill=white,circle,draw=black,inner sep=0pt,text=black,minimum size=6pt]
		\node[state,label=left:$\extensionstate{i-1}$] (qi-1) {};
		\node[state] (fresh1) [node distance=3.25cm,right of=qi-1] {};
		\node[state] (fresh2) [right of=fresh1] {};
		\node (dots1) [node distance=1.125cm,right of=fresh1] {$\cdots$};		
		\node[state] (fresh3) [node distance=2.75cm,right of=fresh2] {};
		\node[state] (fresh4) [node distance=1.5cm,below of=fresh2] {};
		\node[state] (fresh5) [node distance=2.5cm,below of=qi-1] {};
		\node (dots2) [node distance=1.5cm,below of=dots1] {$\cdots$};
		\node[state,label=right:$\extensionstate{i}$] (qi) [right of=fresh3] {};
		\path	(qi-1)		edge	node {$\theassume{\adrreg_\iterator\neq\anexpr}$} (fresh1)
				(fresh2)	edge 	node {$\theassume{\adrreg_1\neq\anexpr}$} (fresh3)
				(fresh3)	edge	node {$\theload{\areg}{\anexpr}$} (qi)
				(fresh2)	edge 	node {$\theassume{\adrreg_1=\anexpr}$} (fresh4)
				(qi-1)		edge node [pos=.275] {$\theassume{\adrreg_\iterator=\anexpr}$} (fresh5);
		\draw (fresh4.east) .. controls +(0:2) and +(210:1) .. (qi.south west) node[below,pos=.15] {$\thelocal{\areg}{\valreg_1}$};		%
		\draw (fresh5.east) .. controls +(0:4) and +(235:3) .. (qi.south) node[below,pos=.1] {$\thelocal{\areg}{\valreg_\iterator}$};		%
	\end{tikzpicture}
\end{center}
Alternatively, assuming $\extensionstate{\text{check},i,\iterator}:=\extensionstate{i-1}$, this can be stated as adding
\begin{align}
	&\hspace{1.45em}
		\set{\atransition{\extensionstate{\text{check},i,\iterator}}{\extensionstate{\text{buf},i,\iterator}}{\theassume{\adrreg_\iterator=\anexpr}}}
		\label{Load:CheckPositiveFirst}\\
	&\disunion~
		\set{\atransition{\extensionstate{\text{check},i,\iterator}}{\extensionstate{\text{check},i,\iterator-1}}{\theassume{\adrreg_\iterator\neq\anexpr}}}
		\label{Load:CheckNegativeFirst}\\	
	&\disunion~
		\set{\atransition{\extensionstate{\text{buf},i,\iterator}}{\extensionstate{i}}{\thelocal{\areg}{\valreg_\iterator}}}\label{Load:BufferFirst}\\
	&\hspace{4.5pt}\vdots~\nonumber\\
	&\disunion~
		\set{\atransition{\extensionstate{\text{check},i,1}}{\extensionstate{\text{buf},i,1}}{\theassume{\adrreg_1=\anexpr}}}\label{Load:CheckPositiveLast}\\
	&\disunion~
		\set{\atransition{\extensionstate{\text{check},i,1}}{\extensionstate{\text{mem},i}}{\theassume{\adrreg_1\neq\anexpr}}}\label{Load:CheckNegativeLast}\\	
	&\disunion~
		\set{\atransition{\extensionstate{\text{buf},i,1}}{\extensionstate{i}}{\thelocal{\areg}{\valreg_1}}}\label{Load:BufferLast}\\
		\nonumber\\
	&\disunion~
		\set{\atransition{\extensionstate{\text{mem},i}}{\extensionstate{i}}{\theload{\areg}{\anexpr}}}\label{Load:Memory}
	\end{align}

We said that out of control state $\extensionstate{n}$ we create a sequence of stores to flush the contents of the auxiliary registers 
and return to the code of the original thread.
\begin{center}
	\begin{tikzpicture}[->,>=stealth',shorten >=1pt, auto, node distance=3.25cm, transform shape, scale=1]%
		\tikzstyle{every state}=[fill=white,circle,draw=black,inner sep=0pt,text=black,minimum size=6pt]
		\node[state,label=left:$\extensionstate{n}$] (qn) {};
		\node[state] (fresh1) [node distance=2.75cm,right of=qn] {};
		\node (dots) [node distance=.6cm,right of=fresh1] {$\cdots$};
		\node[state] (fresh2) [node distance=.6cm,right of=dots] {}; 
		\node[state,label=right:$\destinationstateOf{\aninstruction_n}$] (q) [right of=fresh2] {};
		\path	(qn)		edge	node {$\thestore{\adrreg_1}{\valreg_1}$} (fresh1)
				(fresh2)	edge 	node {$\thestore{\adrreg_\bound}{\valreg_\bound}$} (q);
	\end{tikzpicture}
\end{center}
Alternatively, we could have stated it as adding
\begin{align}
	&\hspace{1.45em}
		\set{\atransition{\extensionstate{n}}{\extensionstate{\text{flush},1}}{\thestore{\adrreg_1}{\valreg_1}}}\label{Flush:First}\\	
	&\hspace{5pt}\vdots&&~\nonumber\\
	&\disunion~
		\set{\atransition{\extensionstate{\text{flush},\bound-1}}{\destinationstateOf{\aninstruction_n}}{\thestore{\adrreg_\bound}{\valreg_\bound}}}	
		\label{Flush:Last}
\end{align}

Furthermore, for all instructions $\aninstruction\in\instructions_{\attackthread}$ with $\sourcestateOf{\aninstruction}=\sourcestateOf{\aninstruction_i}$ 
for some $i\in\intrange{1}{n}$ and for which $\aninstruction\neq\aninstruction_i$ 
we added instructions that flush the stores buffered in the auxiliary registers and return to $\destinationstateOf{\aninstruction}$.
\begin{center}
	\begin{tikzpicture}[->,>=stealth',shorten >=1pt, auto, node distance=3.5cm, transform shape, scale=1]%
		\tikzstyle{every state}=[fill=white,circle,draw=black,inner sep=0pt,text=black,minimum size=6pt]
		\node[state,label=left:$\extensionstate{i}$] (qi) {};
		\node[state] (fresh1) [node distance=2.75cm,right of=qi] {};
		\node (dots) [node distance=.6cm,right of=fresh1] {$\cdots$};
		\node[state] (fresh2) [node distance=.6cm,right of=dots] {}; 
		\node[state] (fresh3) [right of=fresh2] {};
		\node[state,label=right:$\destinationstateOf{\aninstruction}$] (q) [node distance=2cm,right of=fresh3] {};
		\path	(qi)		edge	node {$\thestore{\adrreg_1}{\valreg_1}$} (fresh1)
				(fresh2)	edge 	node {$\thestore{\adrreg_\iterator}{\valreg_\iterator}$} (fresh3)
				(fresh3)	edge 	node {$\commandOf{\aninstruction}$} (q);
	\end{tikzpicture}
\end{center}
Alternatively, we could have stated it as adding
\begin{align}
	&\hspace{1.45em}
		\set{\atransition{\extensionstate{i}}{\extensionstate{\text{next},i,1}}{\thestore{\adrreg_1}{\valreg_1}}}\label{Flush:NextFirst}\\	
	&\hspace{5pt}\vdots&&~\nonumber\\
	&\disunion~
		\set{\atransition{\extensionstate{\text{next},i,\iterator-1}}{\extensionstate{\text{next},i,\iterator}}{\thestore{\adrreg_\iterator}{\valreg_\iterator}}}
		\label{Flush:NextLast}\\
	&\disunion~
		\set{\atransition{\extensionstate{\text{next},i,\iterator}}{\destinationstateOf{\aninstruction}}{\commandOf{\aninstruction}}}
		\label{Another:Inst}
\end{align}

Finally, for all load instructions $\aninstruction_i$, where $i<n$, as well as out of $\extensionstate{1}$ 
we added instructions that flush and fence the pair $(\adrreg_1,\valreg_1)$, 
make the remaining buffered stores in the auxiliary registers visible, 
and return to $\acontrolstate$. 
Here $\acontrolstate:=\sourcestateOf{\aninstruction_i}$ in the load case 
and $\acontrolstate:=\destinationstateOf{\aninstruction_1}$ otherwise.
\begin{center}
	\begin{tikzpicture}[->,>=stealth',shorten >=1pt, auto, node distance=3.5cm, transform shape, scale=1]%
		\tikzstyle{every state}=[fill=white,circle,draw=black,inner sep=0pt,text=black,minimum size=6pt]
		\node[state,label=left:$\extensionstate{i}$] (qi) {};
		\node[state] (fresh1) [node distance=2.75cm,right of=qi] {};
		\node[state] (fresh2) [node distance=1.75cm,right of=fresh1] {}; 
		\node (dots) [node distance=.6cm,right of=fresh2] {$\cdots$};
		\node[state] (fresh3) [node distance=.6cm,right of=dots] {};
		\node[state,label=right:$\acontrolstate$] (q) [node distance=3.5cm,right of=fresh3] {};
		\path	(qi)		edge	node {$\thestore{\adrreg_1}{\valreg_1}$} (fresh1)
				(fresh1)	edge 	node [yshift=2pt] {$\themfence$} (fresh2)
				(fresh3)	edge 	node {$\thestore{\adrreg_\iterator}{\valreg_\iterator}$} (q);
	\end{tikzpicture}
\end{center}
Alternatively, we could have stated it as adding
\begin{align}
	&\hspace{1.45em}
		\set{\atransition{\extensionstate{i}}{\extensionstate{\text{fence},i}}{\thestore{\adrreg_1}{\valreg_1}}}\label{Intermediary:FlushFirst}\\
	&\disunion~	
		\set{\atransition{\extensionstate{\text{fence},i}}{\extensionstate{\text{orig},i,2}}{\themfence}}\label{Intermediary:Fence}\\
	&\disunion~
		\set{\atransition{\extensionstate{\text{orig},i,2}}{\extensionstate{\text{orig},i,3}}{\thestore{\adrreg_2}{\valreg_2}}}\label{Intermediary:FlushSecond}\\
		&\hspace{5pt}\vdots&&~\nonumber\\
	&\disunion~
		\set{\atransition{\extensionstate{\text{orig},i,\iterator}}{\acontrolstate}{\thestore{\adrreg_\iterator}{\valreg_\iterator}}}
		\label{Intermediary:FlushLast}
\end{align}
We can now turn to the actual proof of Lemma~\ref{Lemma:Extension}.
\endlemmaatend

\proofatend
Assume $\attackthread$ is the thread of $\reowit:=\aninstruction_1\cdot\ldots\cdot\aninstruction_n$, 
$\eventautomaton{\tso}{\bprog\glue\reowit}:=(\eventsextended,\statesextended,\eventtransitions{\tso},\initialextended,\finalextended)$,
$\instructions$ and $\controlstates$ are the instructions and states of $\bprog$,
$\addrdomain$ and $\regdomain$ are registers and addresses used by $\bprog$,
and $\instructionsextended$ are the instructions $\instructions_\athread'$ of $\bprog\glue\reowit$ as described in Section~\ref{Section:LazyReachability}.

A direct result of Lemmas~\ref{Lemma:NoUselessDelays}~and~\ref{Lemma:SameTraceReach} is that
\tso{} computations of $\bprog$ that delay flushes of $\attackthread$ locally the least reach all the states in the set $\reachOf{\tso}{\bprog}$.
Assume $\acomp\in\computationsOf{\tso}{\bprog}$ is a computation where flushes of $\attackthread$ are delayed locally the least
as Lemma~\ref{Lemma:NoUselessDelays} describes and let $\astate_0,\ldots,\astate_m\in\eventstatesOf{\tso}$ for some $m\in\nat$ 
be all the states along the transition sequence $\initialeventstate\transitionto{\acomp}\astate$, 
i.e., $\astate_0:=\initialeventstate$ and $\astate_m:=\astate$.
Also, for all $k\in[0,m]$, let $\acomp_k$ denote prefixes of $\acomp$ with $\astate_0\transitionto{\acomp_k}\astate_k$.

We prove by induction over state indexes $k\in[0,m]$ that there exist prefixes $\bcomp_k$ of $\bcomp\in\computationsOf{\tso}{\bprog\glue\reowit}$ 
and states $\astate'_0,\ldots,\astate'_m\in\statesextended$ along 
$\initialextended\transitionto{\bcomp}\astate'\in\eventtransitions{\tso}^*$ with $\astate'_0:=\initialextended$ and $\astate'_m:=\astate'$ 
such that the following invariants hold:
\begin{asparaitem}
\item[{\bf Inv-0}] $\initialeventstate\transitionto{\acomp'}(\pcconf,\valconf,\bufconf)$ 
and $\initialextended\transitionto{\bcomp'}(\pcconf',\valconf',\bufconf')$.
\item[{\bf Inv-1}] If $\pcconf$ and $\pcconf'$ differ then they only differ for thread $\attackthread$. 
	Moreover, if $\pcconf(\attackthread)\neq\pcconf'(\attackthread)$ then 
	$\pcconf(\attackthread)=\destinationstateOf{\aninstruction_i}$ and $\pcconf'(\attackthread)=\extensionstate{i}$ for some $i\in\intrange{1}{n-1}$.
\item[{\bf Inv-2}] $\valconf'(\anaddr)=\valconf(\anaddr)$ for all $\anaddr\in\addrdomain\cup\regdomain$.
\item[{\bf Inv-3}] $\bufconf$ and $\bufconf'$ differ at most for $\attackthread$.
	Furthermore, if $\bufconf(\attackthread)\neq\bufconf'(\attackthread)$ then
	$\pcconf'(\attackthread)=\extensionstate{i}$ for some $i\in\intrange{1}{n-1}$ and
	$\bufconf(\attackthread)=({\valOf{\adrreg_\iterator}},{\valOf{\valreg_\iterator}})
		\cdot\ldots\cdot({\valOf{\adrreg_1}},{\valOf{\valreg_1}})\cdot\bufconf'(\attackthread)$
	where $\iterator$ stores are seen along $\reowit$ from $\sourcestateOf{\aninstruction_1}$ to $\destinationstateOf{\aninstruction_i}$.
\end{asparaitem}

For the induction {\bf base case} $k=0$, 
$\acomp_0=\epsilon$, $\astate_0=\initialeventstate$, $\pcconf=\initpcconf$, $\valconf=\initvalconf$, and $\bufconf=\initbufconf$.
Then, for $\bcomp_0:=\epsilon$ and $\astate_0'=\initialextended$, invariants {\bf Inv-0...3} hold.

For the {\bf induction step case}, assume that invariants {\bf Inv-0...3} hold for $k<m$ and 
that $\astate_k\transitionto{\anevent}\astate_{k+1}:=(\pcconf{}_+,\valconf{}_+,\bufconf{}_+)$
for some $\anevent\in\events$. 
We use a case distinction over possible events $\anevent$ to define $\bcomp_{k+1}$ such that 
$\astate'_0\transitionto{\bcomp_{k+1}}\astate'_{k+1}:=(\pcconf{}_+',\valconf{}_+',\bufconf{}_+')$
and invariants {\bf Inv-0...3} hold for $k+1$.

If \underline{$\threadOf{\anevent}:=\athread'\neq\attackthread$} it means $\instructionOf{\anevent}\in\instructionsextended$
is enabled in $\pcconf'(\athread')$, 
so there exist $\anevent'\in\eventsextended$ and $\astate'_{k+1}\in\statesextended$ such that 
$\instructionOf{\anevent'}:=\instructionOf{\anevent}$ and 
$\atransition{\astate'_k}{\astate'_{k+1}}{\anevent'}\in\eventtransitions{\tso}$ in $\eventautomaton{\tso}{\bprog\glue\reowit}$.
We define $\bcomp_{k+1}:=\bcomp_k\cdot\anevent'$ and find that, 
by the $\eventtransitions{\tso}$ semantics (Figure~\ref{Figure:TSORules}) and under the assumption that invariants {\bf Inv-0...3} hold for $k$, 
invariants {\bf Inv-0...3} also hold for $k+1$.

If \underline{$\threadOf{\anevent}=\attackthread$} we make the following case distinction over $\anevent$ and $\pcconf'(\attackthread)$.
\begin{asparaitem}
	\item[\fbox{1}\,``$\anevent$ is a flush event.''] %
		This first case deals with the possibility that a store operation is flushed. 
		Depending on whether $\bufconf'(\attackthread)\neq\epsilon$, 
		we either flush the oldest address-value pair of $\bufconf'(\attackthread)$ or the first address-value auxiliary registers pair. 
		By Lemma~\ref{Lemma:NoUselessDelays}, the later case can only happen when 
		$\pcconf'(\attackthread)=\extensionstate{i}$ for some $i\in\intrange{2}{n-1}$ and $\aninstruction_i$ performs a load or $i=1$. 
		 
		\quad If $\bufconf'(\attackthread)\neq\epsilon$ we flush the oldest write access buffered.
		Namely, let $\anevent_\flushevent\in\eventsextended$ and $\astate'_{k+1}\in\statesextended$ such that, according to rule ({WM}), 
		$\atransition{\astate'_k}{\astate'_{k+1}}{\anevent_\flushevent}\in\eventtransitions{\tso}$.
		We define $\bcomp_{k+1}:=\bcomp_k\cdot\anevent_\flushevent$ and invariants {\bf Inv-0...3} hold for $k+1$ since
			\begin{asparaitem}
				\item[\quad(0)] {\bf Inv-0,3} hold for $k$ so $\astate_0\transitionto{\acomp_{k+1}}\astate_{k+1}$
					and $\astate'_0\transitionto{\bcomp_{k+1}}\astate'_{k+1}$, implying {\bf Inv-0} holds for $k+1$.
				\item[\quad(1)] {\bf Inv-1} holds for $k$, $\pcconf{}_+(\attackthread)=\pcconf(\attackthread)$,
					and $\pcconf{}_+'(\attackthread)=\pcconf'(\attackthread)$, so {\bf Inv-1} holds for $k+1$.
				\item[\quad(2)] {\bf Inv-2,3} hold for $k$, so events $\anevent$ and $\anevent_\flushevent$ update the same address by a same value 
					and {\bf Inv-2} holds for $k+1$.
				\item[\quad(3)] {\bf Inv-3} holds for $k$ and events $\anevent$ and $\anevent_\flushevent$ remove one address-value pair from both
					$\bufconf(\attackthread)$ and $\bufconf'(\attackthread)$, so {\bf Inv-3} holds for $k+1$.
			\end{asparaitem}

		\quad Otherwise, 	$\bufconf'(\attackthread)=\epsilon$ and $\iterator$ stores are encountered from $\sourcestateOf{\aninstruction_1}$ 
			to $\pcconf'(\attackthread)=\extensionstate{i}$ for some $i\in\intrange{1}{n-1}$.
			Then $\bufconf(\attackthread)=({\valOf{\adrreg_\iterator}},{\valOf{\valreg_\iterator}})
				\cdot\ldots\cdot({\valOf{\adrreg_1}},{\valOf{\valreg_1}})$ and, by Lemma~\ref{Lemma:NoUselessDelays}, 
			we know $\aninstruction_i$ is either the first store $\aninstruction_1$ of $\reowit$ or a load. 
			Either way, let $\anevent_{1},\ldots,\anevent_{\iterator},\anevent_\flushevent,\anevent_\fenceevent\in\eventsextended$
			match equations (\ref{Intermediary:FlushFirst}--\ref{Intermediary:FlushLast}) in the extension
			and $\astate'_{k+1}\in\statesextended$ such that events $\anevent_{j}$ are, for all $j\in\intrange{1}{\iterator}$, 
			the buffering events for the stores (\ref{Intermediary:FlushFirst},\ref{Intermediary:FlushSecond}--\ref{Intermediary:FlushLast}),
			$\anevent_\flushevent$ is the flush event for the store (\ref{Intermediary:FlushFirst}),
			$\anevent_\fenceevent$ is the event for the fence (\ref{Intermediary:Fence}),
			and ${\astate'_k}\transitionto{\anevent_{1}\cdot\anevent_\flushevent\cdot\anevent_\fenceevent\cdot
				\anevent_{2}\cdot\ldots\cdot\anevent_{\iterator}}{\astate'_{k+1}}\in\eventtransitions{\tso}^*$
			according to rules ({ST},{MEM},{F}) in Figure~\ref{Figure:TSORules}. 
			We then define $\bcomp_{k+1}:=\bcomp_k\cdot\anevent_{1}\cdot\anevent_\flushevent\cdot\anevent_\fenceevent\cdot
				\anevent_{2}\cdot\ldots\cdot\anevent_{\iterator}\cdot\anevent$ and find that invariants {\bf Inv-0...3} hold for $k+1$ since
				\begin{asparaitem}
					\item[\quad(0)] {\bf Inv-0,3} hold for $k$ so $\astate_0\transitionto{\acomp_{k+1}}\astate_{k+1}$
					and $\astate'_0\transitionto{\bcomp_{k+1}}\astate'_{k+1}$, i.e. {\bf Inv-0} holds for $k+1$.
					\item[\quad(1)] {\bf Inv-1} holds for $k$ and $\pcconf{}_+(\attackthread)=\acontrolstate=\pcconf{}_+'(\attackthread)$,
						where $\acontrolstate:=\sourcestateOf{\aninstruction_i}$ if $\aninstruction_i$ is a load 
						and $\acontrolstate:=\destinationstateOf{\aninstruction_1}$ otherwise,  
						so {\bf Inv-1} holds for $k+1$.
					\item[\quad(2)] {\bf Inv-2,3} hold for $k$, events $\anevent$ and $\anevent_\flushevent$ update the same address by the same value and,
						since the other events do not update any address, {\bf Inv-2} holds for $k+1$.
					\item[\quad(3)] {\bf Inv-3} holds for $k$, and events $\anevent_{2},\ldots,\anevent_{\iterator}$ place the corresponding
						address-value pairs that match $\bufconf{}_+(\attackthread)$ into $\bufconf{}_+'(\attackthread)$, so {\bf Inv-3} holds for $k+1$.
				\end{asparaitem}
	
	\item[\fbox{2}\,``$\anevent$ is not a flush event, $\pcconf'(\attackthread)=\extensionstate{i}$ for $i\in{\intrange{1}{n-1}}$, 
		$\instructionOf{\anevent}\neq\aninstruction_{i+1}$.''] \hfill Event $\anevent$ corresponds to an instruction 
			that does not follow $\reowit$. Then, events for instructions (\ref{Flush:NextFirst}--\ref{Another:Inst})
			place the auxiliary address-value pairs into $\bufconf{}_+'(\attackthread)$ and then perform $\commandOf{\instructionOf{\anevent}}$.  
			Let $\anevent_{1},\ldots,\anevent_{\iterator},\anevent'\in\eventsextended$ and $\astate'_{k+1}\in\statesextended$ 
			such that $\anevent_{j}$ are, for all $j\in\intrange{1}{\iterator}$, the buffering events for stores
			(\ref{Flush:NextFirst}--\ref{Flush:NextLast}), $\anevent'$ is the event for instruction (\ref{Another:Inst}), 
			and ${\astate'_k}\transitionto{\anevent_{1}\cdot\ldots\cdot\anevent_{\iterator}\cdot\anevent'}{\astate'_{k+1}}\in\eventtransitions{\tso}^*$,
			according to the Figure~\ref{Figure:TSORules} rules.
			We define $\bcomp_{k+1}:=\bcomp_k\cdot\anevent_{1}\cdot\ldots\cdot\anevent_{\iterator}\cdot\anevent'$ 
			and find that invariants {\bf Inv-0...3} hold for $k+1$ since
				\begin{asparaitem}
					\item[\quad(0)] {\bf Inv-0} holds for $k$ so $\astate_0\transitionto{\acomp_{k+1}}\astate_{k+1}$ 
						and $\astate'_0\transitionto{\bcomp_{k+1}}\astate'_{k+1}$, i.e. {\bf Inv-0} holds for $k+1$.
					\item[\quad(1)] {\bf Inv-1} holds for $k$ and 
						$\pcconf{}_+(\attackthread)=\destinationstateOf{\instructionOf{\anevent}}=\pcconf{}_+'(\attackthread)$, 
						so {\bf Inv-1} holds for $k+1$.
					\item[\quad(2)] {\bf Inv-2} holds for $k$ and the events $\anevent$ and $\anevent'$ update at most one $\regdomain$ register 
						by the same value, so {\bf Inv-2} holds for $k+1$.
					\item[\quad(3)] {\bf Inv-3} holds for $k$, the buffering store events $\anevent_{1},\ldots,\anevent_{\iterator}$ 
						make the address-value pairs of the auxiliary registers explicit in $\bufconf{}_+'(\attackthread)$, 
						and if events $\anevent$ and $\anevent'$ are buffering events for stores then they add the same address-value pair, 
						so {\bf Inv-3} holds for $k+1$.
				\end{asparaitem}

	\item[\fbox{3}\,``$\instructionOf{\anevent}$ performs a store and \fbox{2} fails.''] %
		We analyze the following subcases depending on the value of $\pcconf'(\attackthread)$.
		\begin{asparaitem}
			\item[\fbox{3a}\,``$\pcconf'(\attackthread)=\extensionstate{i-1}$ for some $i\in{\intrange{1}{n-1}}$.''] 
				Since \fbox{2} does not hold, $\instructionOf{\anevent}=\aninstruction_i$ 
				and auxiliary registers track the store $\aninstruction_i$. 
				Let $\anevent_\anaddr,\anevent_\aval\in\eventsextended$ be events for the instructions in (\ref{Store:AddressValue}) 
				and $\astate'_{k+1}\in\statesextended$ such that
				$\astate'_k\transitionto{\anevent_\anaddr\cdot\anevent_\aval}\astate'_{k+1}\in\eventtransitions{\tso}^*$ according to
				the $\eventtransitions{\tso}$ rule for local assignments.
				We define $\bcomp_{k+1}:=\bcomp_k\cdot\anevent_\anaddr\cdot\anevent_\aval$ and
				find that invariants {\bf Inv-0...3} hold for $k+1$ since
					\begin{asparaitem}
						\item[\quad(0)] {\bf Inv-0} holds for $k$ so $\astate_0\transitionto{\acomp_{k+1}}\astate_{k+1}$
							and $\astate'_0\transitionto{\bcomp_{k+1}}\astate'_{k+1}$, i.e. {\bf Inv-0} holds for $k+1$.
						\item[\quad(1)] {\bf Inv-1} holds for $k$, $\pcconf{}_+(\attackthread)=\destinationstateOf{\aninstruction_i}$, 
							and $\pcconf{}_+'(\attackthread)=\extensionstate{i}$, so {\bf Inv-1} holds for $k+1$.
						\item[\quad(2)] {\bf Inv-2} holds for $k$ and no memory changes occurred outside of auxiliary registers, so {\bf Inv-2} holds for $k+1$.
						\item[\quad(3)] {\bf Inv-3} holds for $k$ and $({\valOf{\adrreg_\iterator}},{\valOf{\valreg_\iterator}})$ matches 
							the address-value pair added by $\anevent$ to $\bufconf{}_+(\attackthread)$, so {\bf Inv-3} holds for $k+1$.
					\end{asparaitem}

			\item[\fbox{3b}\,``$\pcconf'(\attackthread)=\pcconf(\attackthread)\neq\sourcestateOf{\aninstruction_1}$.'']
				This case is sim\-i\-lar to the one when $\threadOf{\anevent}\neq\attackthread$ since $\instructionOf{\anevent}\in\instructionsextended$.
				Then there exist $\anevent'\in\eventsextended$ and $\astate'_{k+1}\in\statesextended$ such that 
				$\instructionOf{\anevent'}=\instructionOf{\anevent}$ and 
				$\atransition{\astate'_k}{\astate'_{k+1}}{\anevent'}\in\eventtransitions{\tso}$ in $\eventautomaton{\tso}{\bprog\glue\reowit}$.
				We define $\bcomp_{k+1}:=\bcomp_k\cdot\anevent'$ and find that, by the $\eventtransitions{\tso}$ semantics (Figure~\ref{Figure:TSORules}), 
				invariants {\bf Inv-0...3} continue to hold for $k+1$.
		\end{asparaitem}

	\item[\fbox{4}\,``$\instructionOf{\anevent}$ performs a load and \fbox{2} fails.''] %
		We analyze the following subcases depending on the value of $\pcconf'(\attackthread)$.
		\begin{asparaitem}
			\item[\fbox{4a}\,``$\pcconf'(\attackthread)=\extensionstate{i-1}$ for some $i\in{\intrange{1}{n-1}}$.'']
				Since \fbox{2} does not hold, 
				$\instructionOf{\anevent}=\aninstruction_i$ and we use (\ref{Load:BufferFirst}--\ref{Load:BufferLast},\ref{Load:Memory}) 
				to load from $\anexpr$ only when no register $\adrreg_j$ matches $\anexpr$ for any $j\in\intrange{1}{\iterator}$.

				\quad If there exists a largest $j\in\intrange{1}{\iterator}$ such that $\adrreg_j=\anexpr$ then 
					$\areg$ will take its value from the auxiliary register $\valreg_j$. 
					Let $\anevent_\iterator,\ldots,\anevent_j,\anevent_\assignevent\in\eventsextended$ and $\astate'_{k+1}\in\statesextended$ such that 
					$\anevent_k$ are, for all $k\in\intrange{j+1}{\iterator}$, 
					the events for negative conditional checks (\ref{Load:CheckNegativeFirst},\ref{Load:CheckNegativeLast}),
					$\anevent_j$ is the event for the earliest positive conditional check (\ref{Load:CheckPositiveFirst},\ref{Load:CheckPositiveLast}),
					$\anevent_\assignevent$ is the event for an instruction (\ref{Load:BufferFirst},\ref{Load:BufferLast}), and 
					$\astate'_k\transitionto{\anevent_\iterator\cdot\ldots\cdot\anevent_j\cdot\anevent_\assignevent}\astate'_{k+1}\in\eventtransitions{\tso}^*$
					according to the rules for conditionals and local assignments in $\eventtransitions{\tso}$.
					We define $\bcomp_{k+1}:=\bcomp_k\cdot\anevent_\iterator\cdot\ldots\cdot\anevent_j\cdot\anevent_\assignevent$ and find that
					the invariants {\bf Inv-0...3} hold for $k+1$ since
					\begin{asparaitem}
						\item[(0)] {\bf Inv-0} holds for $k$ so $\astate_0\transitionto{\acomp_{k+1}}\astate_{k+1}$
							and $\astate'_0\transitionto{\bcomp_{k+1}}\astate'_{k+1}$, i.e. {\bf Inv-0} holds for $k+1$.
						\item[(1)] {\bf Inv-1} holds for $k$, $\pcconf{}_+(\attackthread)=\destinationstateOf{\aninstruction_i}$, 
							and $\pcconf{}_+'(\attackthread)=\extensionstate{i}$, so {\bf Inv-1} holds for $k+1$.
						\item[(2)] {\bf Inv-2} holds for $k$, both $\anevent$ and $\anevent_\assignevent$ update $\areg$ by the same value, 
							and no other event $\anevent_\iterator,\ldots,\anevent_j$ changes any address, so {\bf Inv-2} holds for $k+1$.
						\item[(3)] {\bf Inv-3} holds for $k$ and no event alters buffer contents, so {\bf Inv-3} holds for $k+1$.
					\end{asparaitem}
		
				\quad Otherwise, $\adrreg_j\neq\anexpr$ holds for all $j\in\intrange{1}{\iterator}$ 
					and the register $\areg$ will take its value from the address indicated by $\anexpr$.
					Namely, let $\anevent_\iterator,\ldots,\anevent_1,\anevent_\loadevent\in\eventsextended$ and $\astate'_{k+1}\in\statesextended$ such that
					$\anevent_k$ are, for all $k\in\intrange{1}{\iterator}$, 
					the events for negative conditional checks (\ref{Load:CheckNegativeFirst},\ref{Load:CheckNegativeLast}),
					$\anevent_\loadevent$ is the event for instruction (\ref{Load:Memory}), and 
					$\astate'_k\transitionto{\anevent_\iterator\cdot\ldots\cdot\anevent_1\cdot\anevent_\loadevent}\astate'_{k+1}\in\eventtransitions{\tso}^*$
					according to the rule for conditionals in $\eventtransitions{\tso}$ and ({LB}/{LM}).
					We define $\bcomp_{k+1}:=\bcomp_k\cdot\anevent_\iterator\cdot\ldots\cdot\anevent_1\cdot\anevent_\loadevent$ and find that
					invariants {\bf Inv-0...3} hold for $k+1$:
					\begin{asparaitem}
						\item[(0)] {\bf Inv-0} holds for $k$ so $\astate_0\transitionto{\acomp_{k+1}}\astate_{k+1}$
							and $\astate'_0\transitionto{\bcomp_{k+1}}\astate'_{k+1}$, i.e. {\bf Inv-0} holds for $k+1$.
						\item[(1)] {\bf Inv-1} holds for $k$, $\pcconf{}_+(\attackthread)=\destinationstateOf{\aninstruction_i}$, 
							and $\pcconf{}_+'(\attackthread)=\extensionstate{i}$, so {\bf Inv-1} holds for $k+1$.
						\item[(2)] {\bf Inv-2} holds for $k$, both $\anevent$ and $\anevent_\loadevent$ update $\areg$ by the same value, 
							and no other event $\anevent_\iterator,\ldots,\anevent_1$ changes any address, so {\bf Inv-2} holds for $k+1$.
						\item[(3)] {\bf Inv-3} holds for $k$ and no event alters buffer contents, so {\bf Inv-3} holds for $k+1$.
					\end{asparaitem}
			\item[\fbox{4b}\,``$\pcconf'(\attackthread)=\extensionstate{n-1}$.''] 
				Since \fbox{2} does not hold, $\instructionOf{\anevent}=\aninstruction_n$. Furthermore, because $\iterator=\bound$,
				additionally to performing the events that simulate the load behavior as in subcase \fbox{4a}, 
				the extension returns to the original program flow using events for (\ref{Flush:First}--\ref{Flush:Last}) 
				and makes the auxiliary registers address-value pairs explicit in $\bufconf{}_+'(\attackthread)$.
				
				\quad Let $\anevent_1',\ldots,\anevent_\bound'\in\eventsextended$ and $\astate'_{k+1}\in\statesextended$ such that
				$\anevent_k'$ are, for all $k\in\intrange{1}{\bound}$, the buffering events for stores (\ref{Flush:First},\ref{Flush:Last}),
				and $\astate''_{k+1}\transitionto{\anevent_1'\cdot\ldots\cdot\anevent_\bound'}\astate'_{k+1}\in\eventtransitions{\tso}^*$
				according to ({LS}) from Figure~\ref{Figure:TSORules}, with $\astate''_{k+1}$ being notation for $\astate'_{k+1}$ from \fbox{4a}. 
				We define $\bcomp_{k+1}:=\bcomp'_{k+1}\cdot\anevent_1'\cdot\ldots\cdot\anevent_\bound'$, where $\bcomp'_{k+1}$ is 
				notation for $\bcomp_{k+1}$ from \fbox{4a}, and find that the invariants {\bf Inv-0...3} hold for $k+1$ since
				\begin{asparaitem}
					\item[(0)] {\bf Inv-0} holds for $k$ so $\astate_0\transitionto{\acomp_{k+1}}\astate_{k+1}$
						and $\astate'_0\transitionto{\bcomp_{k+1}}\astate'_{k+1}$, i.e. {\bf Inv-0} holds for $k+1$.
					\item[(1)] {\bf Inv-1} holds for $k$ and $\pcconf{}_+(\attackthread)=\destinationstateOf{\aninstruction_n}=\pcconf{}_+'(\attackthread)$, 
						so {\bf Inv-1} holds for $k+1$.
					\item[(2)] {\bf Inv-2} holds for $k$, both events $\anevent$ and $\anevent_\loadevent$ update $\areg$ by the same value, 
						and no other event $\anevent_\iterator,\ldots,\anevent_1,\anevent_1',\ldots,\anevent_\bound'$ changes any address, 
						so {\bf Inv-2} holds for $k+1$. 
					\item[(3)] {\bf Inv-3} holds for $k$ and events $\anevent_1',\ldots,\anevent_\bound'$ 
						place the corresponding address-value pairs that match $\bufconf{}_+(\attackthread)$ into $\bufconf{}_+'(\attackthread)$, 
						so {\bf Inv-3} holds for $k+1$.
				\end{asparaitem}
			\item[\fbox{4c}\,``$\pcconf'(\attackthread)=\pcconf(\attackthread)$.'']
				This case is similar to \fbox{3b}.
				Let $\anevent'\in\eventsextended$ and $\astate'_{k+1}\in\statesextended$ such that 
				$\instructionOf{\anevent'}=\instructionOf{\anevent}$ and 
				$\atransition{\astate'_k}{\astate'_{k+1}}{\anevent'}\in\eventtransitions{\tso}$ in $\eventautomaton{\tso}{\bprog\glue\reowit}$.
				We define $\bcomp_{k+1}:=\bcomp_k\cdot\anevent'$ and find that, by the $\eventtransitions{\tso}$ semantics (Figure~\ref{Figure:TSORules}), 
				the invariants {\bf Inv-0...3} hold for $k+1$.
		\end{asparaitem}

	\item[\fbox{5}\,``$\anevent$ performs an assignment, conditional, or memory fence and \fbox{2} fails.''] %
		We analyze the following subcases.
		\begin{asparaitem}
			\item[\fbox{5a}\,``$\pcconf'(\attackthread)=\extensionstate{i-1}$ for $i\in{\intrange{1}{n-1}}$.''] 
				Since \fbox{2} does not hold, $\instructionOf{\anevent}=\aninstruction_i$ is either a conditional or an assignment.

				\quad If $\commandOf{\aninstruction_i}=\thelocal{\areg}{\anexpr}$ 
					let $\anevent'\in\eventsextended$ and $\astate'_{k+1}\in\statesextended$ such that
					$\instructionOf{\anevent'}=\atransition{\acontrolstate_{i-1}}{\acontrolstate_i}{\thelocal{\areg}{\anexpr}}$ and 
					$\atransition{\astate'_k}{\astate'_{k+1}}{\anevent'}\in\eventtransitions{\tso}$ by the $\eventtransitions{\tso}$ rule for local assignments.
					We define $\bcomp_{k+1}:=\bcomp_k\cdot\anevent'$ and find that the invariants {\bf Inv-0...3} hold for $k+1$ since
					\begin{asparaitem}
						\item[(0)] {\bf Inv-0} holds for $k$ so $\astate_0\transitionto{\acomp_{k+1}}\astate_{k+1}$
							and $\astate'_0\transitionto{\bcomp_{k+1}}\astate'_{k+1}$, i.e. {\bf Inv-0} holds for $k+1$.
						\item[(1)] {\bf Inv-1} holds for $k$, $\pcconf{}_+(\attackthread)=\destinationstateOf{\aninstruction_i}$, 
							and $\pcconf{}_+'(\attackthread)=\extensionstate{i}$, so {\bf Inv-1} holds for $k+1$.
						\item[(2)] {\bf Inv-2} holds for $k$ and $\anexpr$ is evaluated the same by both $\anevent$ and $\anevent'$,
							so the register $\areg$ is updated by the same value and {\bf Inv-2} holds for $k+1$.
						\item[(3)] {\bf Inv-3} holds for $k$ and no event alters buffer contents, so {\bf Inv-3} holds for $k+1$.
					\end{asparaitem}

				\quad Otherwise, $\commandOf{\aninstruction_i}=\theassume{\anexpr}$. 
					Let $\anevent'\in\eventsextended$ and $\astate'_{k+1}\in\statesextended$
					such that $\instructionOf{\anevent'}=\atransition{\acontrolstate_{i-1}}{\acontrolstate_i}{\theassume{\anexpr}}$ and 
					$\atransition{\astate'_k}{\astate'_{k+1}}{\anevent'}\in\eventtransitions{\tso}$ by the $\eventtransitions{\tso}$ rule for conditionals.
					We define $\bcomp_{k+1}:=\bcomp_k\cdot\anevent'$ and find that the invariants {\bf Inv-0...3} hold for $k+1$ since
				\begin{asparaitem}
					\item[(0)] {\bf Inv-0} holds for $k$ so $\astate_0\transitionto{\acomp_{k+1}}\astate_{k+1}$
						and $\astate'_0\transitionto{\bcomp_{k+1}}\astate'_{k+1}$, i.e. {\bf Inv-0} holds for $k+1$.
					\item[(1)] {\bf Inv-1} holds for $k$, $\pcconf{}_+(\attackthread)=\destinationstateOf{\aninstruction_i}$, 
						and $\pcconf{}_+'(\attackthread)=\extensionstate{i}$, so {\bf Inv-1} holds for $k+1$.
					\item[(2)] {\bf Inv-2} holds for $k$ and both $\anevent$ and $\anevent'$ do not change any address,
						so {\bf Inv-2} holds for $k+1$.
					\item[(3)] {\bf Inv-3} holds for $k$ and no event alters buffer contents, so {\bf Inv-3} holds for $k+1$.
				\end{asparaitem}

			\item[\fbox{5b}\,``$\pcconf'(\attackthread)=\pcconf(\attackthread)$.'']
				This case covers the remaining possibilities when $\anevent$ is an assignment, conditional, or memory fence.
				Similar to cases \fbox{3b} and \fbox{4c}, let $\anevent'\in\eventsextended$ and $\astate'_{k+1}\in\statesextended$ such that 
				$\instructionOf{\anevent'}=\instructionOf{\anevent}$ and 
				$\atransition{\astate'_k}{\astate'_{k+1}}{\anevent'}\in\eventtransitions{\tso}$ in $\eventautomaton{\tso}{\bprog\glue\reowit}$.
				We define $\bcomp_{k+1}:=\bcomp_k\cdot\anevent'$ and find that, by the $\eventtransitions{\tso}$ semantics (Figure~\ref{Figure:TSORules}), 
				invariants {\bf Inv-0...3} hold for $k+1$.	
		\end{asparaitem}
\end{asparaitem}

The above case distinction covers all possibilities for events $\anevent$ that $\acomp$ may perform from $\astate_k$.
Hence, by complete induction, the extension does not remove \tso{}-reachable states:
if $\astate=(\pcconf,\valconf,\bufconf)$ is reachable by $\acomp$
then there exists $\astate'=(\pcconf',\valconf',\bufconf')$ and $\bcomp\in\computationsOf{\tso}{\bprog\glue\reowit}$
such that $\astate'$ is reachable by $\bcomp$ in $\bprog\glue\reowit$, $\pcconf=\pcconf'$, $\valconf(\anaddr)=\valconf'(\anaddr)$
for all $\anaddr\in\addrdomain\cup\regdomain$, and $\bufconf=\bufconf'$ are empty.

For the reverse direction, let $\proj{\awitness}\colon\computationsOf{\tso}{\bprog}\to\computationsOf{\tso}{\bprog\glue\awitness}$ 
be the map $\acomp\mapsto\bcomp$ that the inductive proof implies,
respectively $\proj{\awitness}\colon\events\to\eventsextended^*$ its restriction to events matching the different inductive cases.
Furthermore, consider computations $\bcomp\in\computationsOf{\tso}{\bprog\glue\reowit}$
that do not interleave events of other threads within the events of sequences $\proj{\awitness}(\anevent)$.
Such computations reach the entire set $\reachOf{\tso}{\bprog\glue\reowit}$.
E.g.,\ since local events $\anevent_\iterator,\ldots,\anevent_1$ as in case \fbox{4a} that precede $\anevent_\loadevent$
can be performed right before $\anevent_\loadevent$, 
the above restriction does not change the set of \tso{}-reachable states in $\bprog\glue\reowit$.
Note that $\proj{\awitness}$ is a bijection between such computations $\bcomp$ and computations $\acomp\in\computationsOf{\tso}{\bprog}$ 
that delay flushes locally the least wrt. $\attackthread$.
Another induction can show that for each computation $\bcomp$ as described above
there exists a computation $\acomp\in\computationsOf{\tso}{\bprog}$ such that invariants {\bf Inv-0...3} hold for prefixes of $\bcomp$ and $\acomp$.
This implies that the extension by $\reowit$ does not add \tso{}-reachable states. \qed
\endproofatend

We now show that the oracle never suggests the same sequence $\sigma$ twice.
Since in $\bprog\glue\reowit$ we introduce new instructions that correspond to instructions in $\bprog$, we have to map back sequences of instructions $\instructionsextended$ in $\bprog\glue\reowit$ to sequences of instructions  $\instructions$ in $\bprog$.
Intuitively, the mapping gives the original instructions from which the sequence was produced.
Formally, we define a family of projection functions $\wsproj{\reowit}\colon\instructionsextended^*\to\instructions^*$ with $\wsproj{\reowit}(\varepsilon):=\varepsilon$
and $\wsproj{\reowit}(w\cdot\aninstruction):=\wsproj{\reowit}(w)\cdot\wsproj{\reowit}(\aninstruction)$.
For an instruction $\aninstruction\in\instructionsextended$, we define
$\wsproj{\reowit}(\aninstruction):=\aninstruction$ provided $\aninstruction\in\instructions$. 
We set $\wsproj{\reowit}(\aninstruction):=\aninstruction_i$
if $\aninstruction$ is a first instruction on the path between $\extensionstate{i-1}$ and $\extensionstate{i}$ for some $i\in\intrange{1}{n}$.
In all other cases, we delete the instruction, $\wsproj{\reowit}(\aninstruction):=\varepsilon$.
Then, if $\bprog_0:=\aprog$ is the original program,
$\reowit_j$ is the sequence that the oracle returns in iteration $j\in\nat$ of the while loop,
and $w$ is a sequence of instructions in $\bprog_{j+1}$,
we define $\wsproj{}(w):=\wsproj{\reowit_0}(\ldots\wsproj{\reowit_j}(w))$.
This latter function maps sequences of instructions in program $\bprog_{j+1}$ back to sequences of instructions in $\aprog$.

We are ready to state our key lemma.
Intuitively, if the oracle in Algorithm~\ref{Algorithm:Reachability} returns $\reowit:=\oracleOf{\bprog}$ and $\reowit':=\oracleOf{\bprog\glue\reowit}$
then, necessarily, $\wsproj{}(\reowit')\neq\wsproj{}(\reowit)$.

\begin{alemma}\label{Lemma:NewWitnessSupport}%
Let $\bprog_0:=\aprog$ and $\bprog_{i+1}:=\bprog_i\glue\reowitof{i}$ for $\reowitof{i}:=\oracleOf{\bprog_i}$ as in Algorithm~\ref{Algorithm:Reachability}.
If $\reowitof{j+1}\neq\varepsilon$ then $\wsproj{}(\reowitof{j+1})\neq\wsproj{}(\reowitof{i})$ for all $i\leq j$.
\end{alemma}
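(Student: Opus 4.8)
The plan is to argue by contradiction: assume $\wsproj{}(\reowitof{j+1})=\wsproj{}(\reowitof{i})$ for some $i\le j$ and contradict the fact that iteration $i$ deletes the first instruction of $\reowitof{i}$. First I would analyse how \emph{stores} behave under $\wsproj{}$. The first instruction on any path from $\extensionstate{k-1}$ to $\extensionstate{k}$ created by an extension is a local assignment $\thelocal{\adrreg_k}{\anexpr}$ (store case) or a conditional (load case), hence never a store; and every store the extension introduces is a flush store $\thestore{\adrreg_l}{\valreg_l}$ lying off the main path, hence deleted by the corresponding $\wsproj{\reowit}$. Tracing any store of $\bprog_{j+1}$ downwards through $\wsproj{\reowit_j},\dots,\wsproj{\reowit_0}$, at each level it is therefore either kept verbatim (an original instruction of the lower program) or sent to $\varepsilon$; it can never be produced by the ``first-on-path'' rule. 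Thus $\wsproj{}$ maps a store either to $\varepsilon$ or to a genuine store $s\in\instructions$ of $\aprog$, and in the latter case the image is literally $s$, present unchanged at every level it survives.

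Next I would exploit the removal. Write $s_i$ for the first instruction of the projected sequence $\wsproj{}(\reowitof{i})$. The key claim is that $s_i$ is a genuine store of $\aprog$ equal to the deleted instruction $\aninstruction_1$ of $\reowitof{i}$, and that $s_i$ is removed from $\bprog_{i+1}$ and never reinserted (later extensions add only fresh instructions over fresh states and auxiliary registers, and since an instruction is identified with its source and destination, the one with source $\sourcestateOf{\aninstruction_1}$ and a store command is not recreated). Granting the claim, the contradiction is immediate. If $\wsproj{}(\reowitof{j+1})=\wsproj{}(\reowitof{i})$, this sequence begins with the store $s_i$, so the first surviving instruction of $\reowitof{j+1}$ must project to $s_i$; since the oracle requires $\reowitof{j+1}$ to \emph{begin with a store}, and that leading store survives (by the same claim applied to $j+1$), the first surviving instruction \emph{is} a store mapping to $s_i$, hence by the first paragraph it is the original instruction $s_i$ itself. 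But $s_i\notin\instructions$ of $\bprog_{j+1}$, so it cannot occur on the path $\reowitof{j+1}$. The oracle's store requirement is essential here: the only other instruction of $\bprog_{i+1}$ that $\wsproj{\reowit_i}$ sends to $s_i$ is the extension-entry assignment $\thelocal{\adrreg_1}{\anexpr}$, which is not a store and therefore cannot be the instruction the oracle leads with.

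The hard part will be the key claim itself, namely that the leading store of every oracle sequence projects to a genuine store of $\aprog$ rather than to $\varepsilon$. This is not obvious because $\bprog_i$ contains flush stores --- for instance the chain out of $\extensionstate{n}$ --- from which a fence-free path to a load can start; such a prefix vanishes under $\wsproj{}$ and could shift, or even change the type of, the first surviving instruction, so that $\wsproj{}(\reowitof{i})$ might fail to begin with a genuine store. I would discharge this by an auxiliary induction on the iteration index, maintaining the invariant that every sequence returned by the oracle has a leading instruction projecting to a genuine store of $\aprog$: the base case $\bprog_0=\aprog$ has only genuine stores, and in the step one analyses the finitely many gadgets the extension adds, checking that each flush store is, by construction, generated only after the buffering assignments that record the store it emulates, so that a fence-free oracle path starting at such a flush store projects identically to one that starts at the genuine store being emulated. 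Establishing this reduction --- that leading flush stores contribute nothing new to $\wsproj{}$ and hence may be assumed genuine --- is the main obstacle; once it is in place, $s_i$ is genuine for every $i$ and the contradiction above closes the proof for all $i\le j$.
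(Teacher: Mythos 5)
Your preparatory observations are sound: extension-added stores are exactly the off-path flush stores and are erased by $\wsproj{}$, first-on-path instructions are never stores, and the deleted $\aninstruction_1$ is never reinserted (all gadget instructions have fresh source or target states). But the proof genuinely hinges on your ``key claim'' --- that the leading store of every oracle sequence projects to a genuine store of $\aprog$ --- and that claim is not only undischarged, it is false as stated. The oracle conditions (store first, load last, fence-free, path-connected) do not prevent a returned sequence from \emph{beginning} with an extension-added flush store: for instance the first store $\thestore{\adrreg_1}{\valreg_1}$ of the fence-free chain out of $\extensionstate{n}$, continued through $\destinationstateOf{\aninstruction_n}$ into original code down to any later load, is a perfectly legitimate oracle output, and its leading store projects to $\varepsilon$. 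Your proposed repair --- that a path led by such a flush store ``projects identically to one that starts at the genuine store being emulated'' --- fails twice over: the genuine store being emulated is precisely the deleted $\aninstruction_1$, so no path in $\bprog_{i+1}$ starts at it; and $\wsproj{}$ does not substitute the emulated store for the leading flushes, it simply erases them, so the projection begins with whatever original instruction follows the gadget, which need not be a store at all. Once the invariant fails, the whole argument unravels at both ends: at level $i$ you can no longer identify the head $s_i$ of $\wsproj{}(\reowitof{i})$ with the instruction deleted at iteration $i$, and at level $j+1$ the first \emph{surviving} instruction of $\reowitof{j+1}$ need not be its leading store --- it can be the entry assignment $\thelocal{\adrreg_1}{\anexpr}$, which projects to $s_i$ without being a store, exactly the case your ``oracle leads with a store'' move was supposed to exclude but now cannot.

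The missing ingredient is the one your deletion-only argument never touches: the $\themfence$ that the construction places on the exit gadgets out of $\extensionstate{1}$ and out of every load state $\extensionstate{i}$. The paper's proof compares only the first and last instructions of the two sequences and argues that, by construction of $\bprog_{i+1}$, \emph{every} control flow that recreates $\wsproj{}(\aninstruction_\text{first}')$ before $\wsproj{}(\aninstruction_\text{last}')$ passes a fence between the two; since oracle sequences are fence-free, no later sequence can project to the same first-store/last-load pair, and in particular not to the same sequence. This is essential because the extension deliberately re-creates the projection of $\aninstruction_1$ via the entry assignment --- so removal alone cannot prevent a replay; only the fences cap every fence-free continuation from that re-creation to the final load. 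A corrected proof along your lines would have to fold this fence analysis into the case where the leading store or intermediate instructions vanish under $\wsproj{}$, at which point it essentially becomes the paper's argument.
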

\begin{proof}
Assume, to the contrary, that $\wsproj{}(\reowitof{j+1})=\wsproj{}(\reowitof{i})$ for some $i\leq j$ where
$\reowitof{j+1}:=\oracleOf{\bprog_{j+1}}$ and $\reowitof{i}:=\oracleOf{\bprog_i}$.
Let $\aninstruction_\text{first}$ be the first (store) instruction and $\aninstruction_\text{last}$ the last (load) instruction of $\reowitof{j+1}$.
Similarly, let $\aninstruction_\text{first}'$ and $\aninstruction_\text{last}'$ be the first and last instructions of $\reowitof{i}$.
Since $\wsproj{}(\reowitof{j+1})=\wsproj{}(\reowitof{i})$ it means that
$\wsproj{}(\aninstruction_\text{first})=\wsproj{}(\aninstruction_\text{first}')$ and
$\wsproj{}(\aninstruction_\text{last})=\wsproj{}(\aninstruction_\text{last}')$.

However,
since all control flows of $\bprog_{i+1}:=\bprog_i\glue\reowitof{i}$ that recreate $\wsproj{}(\aninstruction_\text{first}')$ before
$\wsproj{}(\aninstruction_\text{last}')$ also place a fence between the two,
no other later sequences that the oracle returns have $\wsproj{}(\aninstruction_\text{first}')$ come before $\wsproj{}(\aninstruction_\text{last}')$.
This in particular means that $\reowitof{j+1}=\oracleOf{\bprog_{j+1}}$
where $\wsproj{}(\aninstruction_\text{first})$ comes before $\wsproj{}(\aninstruction_\text{last})$ does not exist.
In conclusion, the initial assumption is false. \qed
\end{proof}

We can now prove Algorithm~\ref{Algorithm:Reachability} sound and complete for acyclic programs (Theorem~\ref{Theorem:Correctness}).
Lemma~\ref{Lemma:NewWitnessSupport} and the assumption that the input program is acyclic ensure that
if no goal state is found \seqcon{}-reachable (Line~\ref{Algorithm:Reachable}), then Algorithm~\ref{Algorithm:Reachability}
eventually runs out of sequences $\reowit$ to return (Line~\ref{Algorithm:RobustnessCheck}).
If that is the case, $\oracleOf{\bprog}$ returns $\varepsilon$ in the last iteration of Algorithm~\ref{Algorithm:Reachability}.
By the first oracle condition, we know that the \seqcon{}- and \tso{}-reachable states of $\bprog$ are the same.
Hence, no goal state is \tso{}-reachable in $\bprog$ and, by Lemma~\ref{Lemma:Extension},
no goal state is \tso{}-reachable in the input program $\aprog$ either.
Otherwise, a goal state $\astate$ is \seqcon{}-reachable by some computation $\tau$ in $\bprog_j$ for some $j\in\nat$
and, by Lemma~\ref{Lemma:Extension}, there is a \tso{} computation in $\aprog$ corresponding to $\tau$ that reaches $\astate$.

\begin{atheorem}\label{Theorem:Correctness}
For acyclic programs, Algorithm~\ref{Algorithm:Reachability} terminates. Moreover,
it returns \mytrue{} on input $\aprogram$ if and only if $\reachOf{\tso}{\aprogram}\cap\acceptingstates\neq\emptyset$.
\end{atheorem}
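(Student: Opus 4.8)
The plan is to prove the two assertions --- termination and the reachability equivalence --- in turn, and then combine them: once termination is established, Algorithm~\ref{Algorithm:Reachability} is a total procedure returning exactly one of \mytrue{} or \myfalse{}, so it suffices to show soundness (it returns \mytrue{} only when a goal state is \tso{}-reachable) and completeness (it returns \myfalse{} only when no goal state is \tso{}-reachable). Throughout I write $\bprog_0:=\aprog$ and $\bprog_{i+1}:=\bprog_i\glue\reowitof{i}$ with $\reowitof{i}:=\oracleOf{\bprog_i}$, as in the algorithm.

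\textbf{Termination.} I would show the \textbf{while} loop iterates only finitely often. Each nonempty $\reowitof{i}$ is, by the second oracle condition, a connected path beginning with a store and ending with a load; I would argue that its image $\wsproj{}(\reowitof{i})$ under the composite projection is again a directed walk of this shape in the original program $\aprog$, the per-iteration maps $\wsproj{\reowitof{j}}$ being designed exactly to send retained instructions back to the $\aninstruction_i$ they were generated from while deleting the auxiliary ones. Since $\aprog$ is acyclic, any directed walk in its DAG is a simple path, so it has length at most $\lengthOf{\controlstates}$ and there are only finitely many such walks. By Lemma~\ref{Lemma:NewWitnessSupport} the images $\wsproj{}(\reowitof{i})$ are pairwise distinct, so only finitely many iterations can produce a nonempty oracle response; thereafter $\oracleOf{\bprog}=\varepsilon$ and the algorithm exits at Line~\ref{Algorithm:NotReachable}. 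I expect the step needing most care to be precisely the claim that $\wsproj{}(\reowitof{i})$ is a genuine path of $\aprog$, since it is only this that lets acyclicity bound the number of candidate sequences and thus close the termination argument.

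\textbf{Soundness.} Suppose the algorithm returns \mytrue{} at Line~\ref{Algorithm:Reachable}. Then at some iteration $j$ a goal state $\astate$ is \seqcon{}-reachable in $\bprog_j$ (Line~\ref{Algorithm:SCReachabilityCheck}). As $\computationsOf{\seqcon}{\bprog_j}\subseteq\computationsOf{\tso}{\bprog_j}$, the state $\astate$ is also \tso{}-reachable in $\bprog_j$. Applying Lemma~\ref{Lemma:Extension} from $\bprog_j$ down through $\bprog_{j-1},\ldots,\bprog_0=\aprog$, each step preserves the \tso{}-reachable states up to the valuation of the fresh auxiliary registers $\adrreg_k,\valreg_k$. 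Because goal states constrain only the program counter $\pcconf$ (and the original registers and addresses) and never the auxiliary registers, reachability of a goal state transfers all the way down, giving $\reachOf{\tso}{\aprogram}\cap\acceptingstates\neq\emptyset$.

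\textbf{Completeness.} I would argue the contrapositive. By termination, an execution that does not return \mytrue{} returns \myfalse{} at Line~\ref{Algorithm:NotReachable}, which occurs only when $\oracleOf{\bprog_j}=\varepsilon$ for the current program $\bprog_j$. In that same iteration the \seqcon{}-reachability test failed, so no goal state lies in $\reachOf{\seqcon}{\bprog_j}$; and by the first oracle condition $\reachOf{\seqcon}{\bprog_j}=\reachOf{\tso}{\bprog_j}$, whence no goal state is \tso{}-reachable in $\bprog_j$. Pushing this back with Lemma~\ref{Lemma:Extension} (again using that goal states ignore the auxiliary registers) shows no goal state is \tso{}-reachable in $\aprog$. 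Together with termination and soundness this establishes that Algorithm~\ref{Algorithm:Reachability} returns \mytrue{} if and only if $\reachOf{\tso}{\aprogram}\cap\acceptingstates\neq\emptyset$. \qed
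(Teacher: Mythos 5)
Your proposal is correct and follows essentially the same route as the paper's proof: termination from Lemma~\ref{Lemma:NewWitnessSupport} combined with the finiteness of store-to-load instruction sequences in an acyclic program, the \mytrue{} direction from $\reachOf{\seqcon}{\bprog}\subseteq\reachOf{\tso}{\bprog}$ plus Lemma~\ref{Lemma:Extension}, and the \myfalse{} direction from the first oracle condition plus Lemma~\ref{Lemma:Extension}. Your extra care --- verifying that $\wsproj{}(\reowitof{i})$ is a genuine directed path of the DAG (hence simple and of bounded length) and that goal states ignore the auxiliary registers when Lemma~\ref{Lemma:Extension} is applied iteratively --- merely makes explicit steps the paper leaves implicit.
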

\begin{proof}
It is immediate that Algorithm~\ref{Algorithm:Reachability} always terminates for acyclic programs.
On the one hand, the number of instruction sequences that start with a store and end with a load as in the second oracle condition are finite in $\aprogram$.
On the other hand, by Lemma~\ref{Lemma:NewWitnessSupport},
at each iteration the oracle returns a sequence that differs (in $\aprogram$) from the previous ones.
These two facts imply termination.

We now prove that
$\reachOf{\tso}{\aprogram}\cap\acceptingstates\neq\emptyset$ iff. Algorithm~\ref{Algorithm:Reachability} returns \mytrue\ on input $\aprogram$.
For the easy direction, assume that Algorithm~\ref{Algorithm:Reachability} returns \mytrue\ on input $\aprogram$.
This means that $\reachOf{\seqcon}{\bprog}\cap\acceptingstates\neq\emptyset$ in the last iteration of the algorithm's loop.
Then, by $\reachOf{\seqcon}{\bprog}\subseteq\reachOf{\tso}{\bprog}$ and Lemma~\ref{Lemma:Extension},
we know that $\reachOf{\seqcon}{\bprog}\subseteq\reachOf{\tso}{\aprogram}$.
Hence, $\reachOf{\tso}{\aprogram}\cap\acceptingstates\neq\emptyset$.

For the reverse direction, assume that $\reachOf{\tso}{\aprogram}\cap\acceptingstates\neq\emptyset$.
Furthermore, let $\bprog_0:=\aprogram$ and $\bprog_{i+1}:=\bprog_i\glue\reowitof{i}$ for $\reowitof{i}:=\oracleOf{\bprog_i}$.
By the initial termination argument we know there exists $j\in\nat$ such that the algorithm terminates with $\bprog=\bprog_{j}$ in its last loop iteration.
That means that either the check in Line~\ref{Algorithm:SCReachabilityCheck} of the algorithm succeeds,
in which case Algorithm~\ref{Algorithm:Reachability} returns \mytrue,
or the check in Line~\ref{Algorithm:RobustnessCheck} of the algorithm fails,
i.e.\ $\oracleOf{\bprog_j}=\epsilon$ and $\reachOf{\seqcon}{\bprog_j}\cap\acceptingstates=\emptyset$.
In the latter case, by the first oracle condition we know that $\reachOf{\tso}{\bprog_j}\cap\acceptingstates=\emptyset$ and,
by Lemma~\ref{Lemma:Extension}, we get $\reachOf{\tso}{\bprog_j}\subseteq\reachOf{\tso}{\bprog_0}$.
Then, $\reachOf{\tso}{\aprogram}\cap\acceptingstates=\emptyset$ contradicts the above assumption and concludes the proof. \qed
\end{proof}

To establish that Algorithm~\ref{Algorithm:Reachability} is a semi-decision procedure for all programs,
one can use an iterative bounded model checking approach.
Bounded model checking unrolls the input program $\aprogram$ up to a bound $k\in\nat$ on the length of computations.
Then Algorithm~\ref{Algorithm:Reachability} is applied to the resulting programs $\aprog_k$.
If it finds a goal state \tso{}-reachable in $\aprog_k$, %
this state corresponds to a \tso{}-reachable goal state in $\aprogram$.
Otherwise, we increase $k$ and try again.
By Theorem~\ref{Theorem:Correctness}, we know that Algorithm~\ref{Algorithm:Reachability} is a decision procedure for each $\aprogram_k$.
This implies that Algorithm~\ref{Algorithm:Reachability} together with iterative bounded model checking yields a semi-decision procedure
that terminates for all positive instances of \tso{} reachability.
For negative instances of \tso{} reachability, however,
the procedure is guaranteed to terminate only if the input program $\aprogram$ is acyclic. %

\begin{atheorem}\label{Theorem:PartialCorrectness}
We have $\acceptingstates\cap\reachOf{\tso}{\aprogram}\neq\emptyset$ if and only if, for large enough $k\in\nat$,
Algorithm~\ref{Algorithm:Reachability} returns \mytrue{} on input $\aprogram_k$.
\end{atheorem}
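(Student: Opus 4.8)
The plan is to reduce the statement to Theorem~\ref{Theorem:Correctness} by exploiting the fact that every unrolled program $\aprogram_k$ is acyclic, so Algorithm~\ref{Algorithm:Reachability} is a decision procedure for it. The entire argument then rests on relating the \tso-reachable goal states of $\aprogram_k$ to those of $\aprogram$, via a soundness/completeness correspondence for the bounded unrolling.

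First I would fix the unrolling precisely enough to extract two properties. Write $\aprogram_k$ for the program obtained by unrolling each thread's control-flow automaton so that exactly the computations of $\aprogram$ using at most $k$ events survive; this is a finite \emph{acyclic} program. The two properties are: \textbf{(Soundness)} every $\bcomp\in\computationsOf{\tso}{\aprogram_k}$ projects to a computation $\acomp\in\computationsOf{\tso}{\aprogram}$ reaching the same goal control states with the same memory valuation, so $\reachOf{\tso}{\aprogram_k}\cap\acceptingstates\neq\emptyset$ implies $\reachOf{\tso}{\aprogram}\cap\acceptingstates\neq\emptyset$; and \textbf{(Completeness)} every $\acomp\in\computationsOf{\tso}{\aprogram}$ with $|\acomp|\le k$ lifts to a computation of $\aprogram_k$ reaching the corresponding goal state. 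Since a computation of event-length $k$ is also one of length $k+1$, completeness is monotone in $k$, which is exactly what the phrase ``for large enough $k$'' records.

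The two directions then fall out. For the forward direction, assume $\acceptingstates\cap\reachOf{\tso}{\aprogram}\neq\emptyset$ and pick a witnessing computation $\acomp$ of finite event-length $\ell:=|\acomp|$. By completeness, for every $k\ge\ell$ the goal is \tso-reachable in $\aprogram_k$; since $\aprogram_k$ is acyclic, Theorem~\ref{Theorem:Correctness} makes Algorithm~\ref{Algorithm:Reachability} return \mytrue{} on $\aprogram_k$, so the algorithm answers \mytrue{} for all $k\ge\ell$. For the backward direction, if the algorithm answers \mytrue{} for all large enough $k$ then it does so for at least one $k$; Theorem~\ref{Theorem:Correctness} yields $\reachOf{\tso}{\aprogram_k}\cap\acceptingstates\neq\emptyset$, and soundness transfers this to $\acceptingstates\cap\reachOf{\tso}{\aprogram}\neq\emptyset$.

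The main obstacle will be the soundness/completeness correspondence itself, since the paper does not pin $\aprogram_k$ down formally; everything else is routine given Theorem~\ref{Theorem:Correctness}. The delicate point in completeness is that the bound must count \emph{events}, not source instructions: because goal states have empty buffers, a witnessing \tso{} computation must also perform the flush events that drain the store buffers, and these have to fit inside the unrolling. Once the event-length bound is applied consistently on both sides, soundness is just the projection of an $\aprogram_k$-computation into $\aprogram$ and completeness is the evident lift of a length-bounded computation into $\aprogram_k$, so no further subtlety remains.
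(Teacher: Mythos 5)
Your proposal is correct and follows essentially the same route as the paper: unroll to an acyclic $\aprogram_k$, apply Theorem~\ref{Theorem:Correctness} as a decision procedure for each $\aprogram_k$, and transfer goal reachability between $\aprogram$ and $\aprogram_k$ by lifting (taking $k$ to be the length of a witnessing \tso{} computation, flushes included) and projecting computations. Your explicit remarks on monotonicity in $k$ and on counting \emph{events} rather than source instructions are slightly more careful than the paper's one-line appeals to ``mimicking'' computations, but they formalize exactly the same argument.
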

\begin{proof}
Assume that $\acceptingstates\cap\reachOf{\tso}{\aprogram}\neq\emptyset$.
Then there exist some state $\astate\in\acceptingstates$ and $\acomp\in\computationsOf{\tso}{\aprogram}$
such that $\initialeventstate\transitionto{\acomp}\astate$.
Let $k$ be the length of $\acomp$ and $\acceptingstates'$ be the goal states of $\eventautomaton{\tso}{\aprogram_k}$.
There exists a computation $\bcomp\in\computationsOf{\tso}{\aprogram_k}$ that mimics $\acomp$ and reaches $\astate'\in\acceptingstates'$.
Hence, $\acceptingstates'\cap\reachOf{\tso}{\aprogram_k}\neq\emptyset$ and, by Theorem~\ref{Theorem:Correctness},
Algorithm~\ref{Algorithm:Reachability} returns \mytrue{} on input $\aprogram_k$.

For the reverse direction, assume that Algorithm~\ref{Algorithm:Reachability} returns \mytrue{} on input $\aprogram_k$ for some $k\in\nat$.
Let $\initialeventstate'$ be the initial state of $\eventautomaton{\tso}{\aprogram_k}$ and, as before,
$\acceptingstates'$ be the goal states of $\eventautomaton{\tso}{\aprogram_k}$.
By Theorem~\ref{Theorem:Correctness}, there exists $\astate'\in\acceptingstates'\cap\reachOf{\tso}{\aprogram_k}$
and $\bcomp\in\computationsOf{\tso}{\aprogram_k}$ such that $\initialeventstate'\transitionto{\bcomp}\astate'$.
Since $\aprogram_k$ unrolls $\aprogram$ up to bound $k$,
there exists a computation $\acomp\in\computationsOf{\tso}{\aprogram}$ that mimics $\bcomp$ and reaches $\astate\in\acceptingstates$.
Therefore, $\acceptingstates\cap\reachOf{\tso}{\aprogram}\neq\emptyset$. \qed
\end{proof}
\section{A Robustness-based Oracle}\label{Section:Robustness} %
This section argues why robustness yields an oracle.
Robustness~\cite{AlglaveM11,BDM13,Sen2011,ShashaSnir88} is a correctness criterion requiring that for each \tso\ computation of a program
there is an \seqcon{} computation that has the same data and control dependencies.
Delays due to store buffering are still allowed,
as long as they do not produce dependencies between instructions that \seqcon{} computations forbid.

Dependencies between events are described in terms of the \emph{happens-before} relation of a computation $\tau\in\computationsOf{\tso}{\aprogram}$.
The happens-before relation is a union of the three relations that we define below:
$\happensbeforeOf{\tau} :=\ \progorder\cup\equivalenceorder\cup\conflictorder$.

The \emph{program order relation} $\progorder$ is the order in which threads issue their commands.
Formally, it is the union of the program order relations for all threads:
$\progorder\ :=\ \bigcup_{\athread\in\threaddomain}\progorder^{\athread}$.
Let $\tau'$ be the subsequence of all non-flush events of thread $\athread$ in $\tau$. Then $\progorder^{\athread}\ :=\,\succorder{\tau'}$.

The \emph{equivalence relation} $\equivalenceorder$ links, in each thread, flush events and their matching store events:
$(\athread,\aninstruction,\anaddr)\equivalenceorder(\athread,\writekind,\anaddr)$.

The \emph{conflict relation} $\conflictorder$ orders accesses to the same address.
Assume, on the one hand, that
$\tau=\tau_1\cdot\storeevent\cdot\tau_2\cdot\loadevent\cdot\tau_3\cdot\flushevent\cdot\tau_4$ such that
$\storeevent\equivalenceorder\flushevent$,
events $\storeevent$ and $\loadevent$ access the same address $\anaddr$ and come from thread $\athread$,
and there is no other store event $\storeevent'\in\tau_2$ such that
$\threadOf{\storeevent'}=\athread$ and $\addrOf{\storeevent'}=\anaddr$.
Then the load event $\loadevent$ is an \emph{early read} of the value buffered by the event $\storeevent$
and $\storeevent\conflictorder\loadevent$.

On the other hand, assume $\tau=\tau_1\cdot\anevent\cdot\tau_2\cdot\anevent'\cdot\tau_3$ such that
$\anevent$ and $\anevent'$ are either load or flush events that access the same address $\anaddr$,
neither $\anevent$ nor $\anevent'$ is an early read,
and at least one of $\anevent$ or $\anevent'$ is a flush to $\anaddr$.
If there is no other flush event $\flushevent\in\tau_2$ with $\addrOf{\flushevent}=\anaddr$ then $\anevent\conflictorder\anevent'$.

Figure~\ref{Figure:DekkerTrace} depicts the happens-before relation of computation $\tau_\wit$.

\begin{wrapfigure}{r}{0.395\textwidth}
	\vskip -1.25em
	\small\centering
	\begin{tikzpicture}[nodes={rectangle,draw=none,fill=none}, node distance=.65cm]
	   \node (store0) {$\one{\storeevent}$};
	   \node [node distance=2.5cm,right of=store0] (store1) {$\two{\storeevent}$};
	   \node [below of=store0] 	(flush0) {$\one{\flushevent}$};
	   \node [below of=store1] 	(flush1) {$\two{\flushevent}$};
	   \node [below of=flush0] 	(load0) {$\one{\loadevent}$}; 
		\node [below of=flush1]	(load1) {$\two{\loadevent}$};
	
		\draw[<->] (store0)	edge node [midway,right] {$\scriptstyle\equivalence$} (flush0);
		\draw[<->] (store1)	edge node [midway,right] {$\scriptstyle\equivalence$} (flush1);
		\draw[->] (store0)	edge[bend right=40] node [pos=.95,left] {$\scriptstyle \po$} (load0.west);
		\draw[->] (store1)	edge[bend left=40] node [pos=.95,right] {$\scriptstyle \po$} (load1.east);
		\draw[->]  (load0)	edge node [pos=0.85,above] {$\scriptstyle \cf$} (flush1.west);
		\draw[->]  (load1)	edge node [pos=0.85,above] {$\scriptstyle \cf$} (flush0.east);
	\end{tikzpicture}
 	\caption{The relation $\happensbeforeOf{\tau_\wit}$.}
	\vskip -1.25em
	\label{Figure:DekkerTrace}
\end{wrapfigure}

A program $\aprogram$ is said to be \emph{robust} against \tso{} if for each computation $\tau\in\computationsOf{\tso}{\aprogram}$
there exists a computation $\tau'\in\computationsOf{\seqcon}{\aprogram}$ such that $\happensbeforeOf{\tau}=\,\happensbeforeOf{\tau'}$.
If a program $\aprogram$ is robust, then it reaches the same set of final states under \seqcon{} and under \tso{}:
\lemmaatend
\section{TSO Semantics and Proofs missing in Section~\ref{Section:Robustness}}\label{Appendix:TSOSemanticsAndRobustness}
Figure~\ref{Figure:FullTSORules} describes the full \tso\ semantics.
For completeness, states $\astate\in\eventstatesOf{\memmodel}$ 
use the additional event counter $\indexconf\colon\threaddomain\to\nat$ to identify events.
This is used, e.g., to define matching stores and flushes and does not affect in any way our results.

\begin{figure}[!ht]
\vskip -1em
\centering
\therules{
  \hspace{-3em}
  \therule{ 
    \acommand\,= \theload{\areg}{\anexpr_\anaddr}, 
    \quad$\anaddr=\valOf{\anexpr_\anaddr}$,
    \quad$\projection{\bufconf(\athread)}{(\nat\times\set{\anaddr}\times\datadomain)} = (\anindex,{\anaddr},{\aval})\cdot\beta$
  }{$\astate\transitionto{(\athread,\indexconf(\athread),\aninstruction,\anaddr)} (\indexconf',\pcconf',\valconf[\areg:=\aval],\bufconf)$}
  {RB}
  \hspace{-3em}
  \therule{ 
    \acommand\,= \theload{\areg}{\anexpr_\anaddr}, 
    \quad$\anaddr=\valOf{\anexpr_\anaddr}$,
    \quad$\projection{\bufconf(\athread)}{(\nat\times\set{\anaddr}\times\datadomain)} = \varepsilon$,
    \quad$\aval=\valconf(\anaddr)$
  }{$\astate\transitionto{(\athread,\indexconf(\athread),\aninstruction,\anaddr)} (\indexconf',\pcconf',\valconf[\areg:=\aval],\bufconf)$}
  {RM}
  \hspace{-3em}
  \therule{ 
    \acommand\,= \thestore{\anexpr_\anaddr}{\anexpr_\aval}, 
    \quad$\anaddr=\valOf{\anexpr_\anaddr}$,
    \quad$\aval=\valOf{\anexpr_\aval}$,
    \quad$\anindex=\indexconf(\athread)$
  }{$\astate\transitionto{(\athread,\anindex,\aninstruction,\anaddr)}
    (\indexconf',\pcconf',\valconf,\bufconf[\athread:=(\anindex,\anaddr,\aval)\cdot\bufconf(\athread)])$}
  {LS}
  \hspace{-3.15em}
  \doublerule{ 
    $\bufconf(\athread) = \beta\cdot(\anindex,\anaddr,\aval)$
  }{$\astate\transitionto{(\athread,\anindex,\writekind,\anaddr)}(\indexconf,\pcconf,\valconf[\anaddr:=\aval],\bufconf[\athread:=\beta])$}
  { 
    \acommand\,= \themfence{}, 
    \quad$\bufconf(\athread)=\varepsilon$
  }{$\astate\transitionto{(\athread,\indexconf(\athread),\aninstruction,\bottom)}(\indexconf',\pcconf',\valconf,\bufconf)$}
  {WM}{LF}
  \hspace{-3em}
  \doublerule{ 
    \acommand\,= \thelocal{\areg}{\anexpr},
    \quad$\aval=\valOf{\anexpr}$    
  }{$\astate\transitionto{(\athread,\indexconf(\athread),\aninstruction,\bottom)}(\indexconf',\pcconf',\valconf[\areg:=\aval],\bufconf)$}
  { 
    \acommand\,= \theassume{\anexpr}, 
    \quad$\valOf{\anexpr}\neq0$
  }{$\astate\transitionto{(\athread,\indexconf(\athread),\aninstruction,\bottom)}(\indexconf',\pcconf',\valconf,\bufconf)$}
  {LA}{LC}
}
\vskip -1em
\caption{Transition rules for $\eventautomaton{\tso}{\aprogram}$ assuming 
  $\astate=(\indexconf,\pcconf,\valconf,\bufconf)$ with $\pcconf(\athread)=\acontrolstate$,
  $\aninstruction=\acontrolstate\transitionto{\acommand}\acontrolstate'$ in thread $\athread$,
  $\indexconf'=\indexconf[\athread:=\indexconf(\athread)+1]$,
  $\pcconf'=\pcconf[\athread:=\acontrolstate']$.
  We use $\valOf{\anexpr}$ to evaluate $\anexpr$ under $\valconf$ 
  and $\projection{\bufconf(\athread)}{(\nat\times\set{\anaddr}\times\datadomain)}$
  for stores in $\bufconf(\athread)$ that access $\anaddr$.}
\label{Figure:FullTSORules}
\vskip -.5em
\end{figure}

As mentioned in subsection~\ref{Subsection:Semantics}, under \seqcon{}, stores are flushed immediately:
\therules{
	\vspace{-10pt}
	\therule{\acommand\,= \thestore{\anexpr_\anaddr}{\anexpr_\aval},
		\quad$\anaddr=\valOf{\anexpr_\anaddr}$,
		\quad$\aval=\valOf{\anexpr_\aval}$,
		\quad$\anindex=\indexconf(\athread)$}
		{$\astate\transitionto{(\athread,\anindex,\aninstruction,\anaddr)(\athread,\anindex,\writekind,\anaddr)}
			(\indexconf',\pcconf',\valconf[\anaddr:=\aval],\bufconf)$}{LSWM}
}

\begin{alemma}\label{Lemma:SameTraceReach}
If $\acomp,\bcomp\in\computationsOf{\tso}{\aprogram}$,
$\initialeventstate\transitionto{\acomp}\astate$,
and $\happensbeforeOf{\acomp}=\happensbeforeOf{\bcomp}$ then $\initialeventstate\transitionto{\bcomp}\astate$.
\end{alemma}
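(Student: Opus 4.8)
The plan is to show that $\acomp$ and $\bcomp$ are two linearizations of the \emph{same} set of events that both refine the common happens-before order, and that one can be turned into the other by commuting adjacent, happens-before--unrelated events without changing the reached configuration. First I would argue that $\acomp$ and $\bcomp$ range over the same events. Since $\progorder\,\subseteq\,\happensbeforeOf{\acomp}=\happensbeforeOf{\bcomp}$, every thread $\athread$ issues the same non-flush events in the same order in both computations; by $\equivalenceorder$ each store has a unique matching flush, so the flush events coincide as well. Hence $\bcomp$ is a permutation of $\acomp$. I would also record that each of $\progorder$, $\equivalenceorder$, $\conflictorder$ is contained in the order of the computation that induces it (a store precedes its flush, conflicts are read off the sequence order), so both $\acomp$ and $\bcomp$ are linear extensions of the common relation $\happensbeforeOf{\acomp}$.

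Next I would reduce the statement to a single commutation step plus a bubble-sort. Viewing $\bcomp$ as a target order, I repeatedly pick an adjacent inversion in the current computation, i.e.\ a pair $\anevent\cdot\anevent'$ that occurs in the opposite order in $\bcomp$, and swap it. The crucial observation is that every inversion is happens-before--unrelated: if $\anevent$ and $\anevent'$ were related by $\progorder$, $\equivalenceorder$, or $\conflictorder$, then, since both relations are shared and both computations extend them, $\anevent$ and $\anevent'$ would occur in the same order in $\acomp$ and in $\bcomp$, contradicting that the pair is an inversion. I would then prove a swap lemma: swapping an adjacent happens-before--unrelated inversion pair again yields a computation in $\computationsOf{\tso}{\aprogram}$, reaches the same state, and leaves $\happensbeforeOf{}$ unchanged. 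Preservation of happens-before is easy, since an adjacent swap changes only the mutual order of $\anevent$ and $\anevent'$ and leaves, for every other pair, both the in-between event set and the relative order intact, hence all $\conflictorder$-edges. Preservation of the reached configuration follows because unrelated events touch disjoint parts of the state: a store only modifies its own buffer, a load only its own register (reading either its buffer or memory), an assignment, conditional, or fence only affects registers, control, or the emptiness test of its own buffer, and two memory-touching events of different threads that are unrelated necessarily access different addresses or are two memory reads, so they commute. Finitely many swaps turn $\acomp$ into $\bcomp$, and since each preserves the reached state, $\initialeventstate\transitionto{\bcomp}\astate$.

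The main obstacle is the commutation case analysis for pairs that are happens-before--unrelated yet do \emph{not} commute operationally, namely two same-thread flushes to different addresses (constrained by FIFO) and a flush adjacent to a same-thread fence (constrained by the buffer-emptiness side condition of the fence rule). I would resolve this by noting that both constraints impose a \emph{fixed} relative order that every valid computation must respect: FIFO forces the flush of the earlier-buffered store first, and a fence can fire only after all preceding same-thread stores have been flushed. Consequently such pairs occur in the same order in $\acomp$ and in $\bcomp$ and are therefore never inversions, so the swap lemma is only ever applied to genuinely commuting pairs. This same remark also shows that each intermediate word stays a valid computation: the only side conditions that could fail are exactly these forced-order constraints, which the swaps never violate.
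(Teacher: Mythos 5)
Your overall strategy is sound but genuinely different from the paper's. The paper's proof is a three-line direct comparison of final states, with no surgery on computations at all: since $\bcomp\in\computationsOf{\tso}{\aprogram}$, it reaches some state $\astate'$; equality of the program orders forces $\astate$ and $\astate'$ to agree on the program counters (and event counters), equality of the conflict orders forces agreement of the memory valuation, and membership in $\computationsOf{\tso}{\aprogram}$ forces both buffers to be empty, whence $\astate=\astate'$. Your Mazurkiewicz-trace-style argument --- both computations linearize the common (acyclic) happens-before relation, hence are connected by adjacent transpositions of unrelated pairs, each preserving validity, the reached state, and $\happensbeforeOf{}$ --- proves a strictly stronger, reusable fact (stepwise transformability of one computation into the other), but pays for it with a heavy commutation case analysis that the paper's argument avoids entirely by exploiting that only the final state matters.

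Within your route there is one step whose justification would fail as written: the claim that hb-preservation is ``easy'' because an adjacent swap ``leaves, for every other pair, both the in-between event set and the relative order intact.'' For pairs having one of the swapped events as an endpoint the in-between set \emph{does} change by one element, and the swapped pair itself can acquire a new edge. Concretely, let $\ell$ be an early read of address $\anaddr$ in thread $\athread$ and $f$ the flush of the very store $\storeevent$ that $\ell$ reads, adjacent in the order $\ell\cdot f$. This pair is hb-unrelated ($\progorder$ excludes flushes, $\equivalenceorder$ only matches a store with its own flush, and the second case of $\conflictorder$ explicitly excludes early reads), and swapping to $f\cdot\ell$ even yields a valid computation reaching the same state --- $\ell$ then reads the just-flushed value from memory --- yet it destroys the early-read edge $\storeevent\conflictorder\ell$ and creates $f\conflictorder\ell$, so hb is \emph{not} preserved. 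The repair is exactly the mechanism you already deploy for the FIFO and fence constraints: such a pair can never be an inversion, because the shared edge $\storeevent\conflictorder\ell$ forces, by the definition of early reads, that $\ell$ occurs between $\storeevent$ and its flush in \emph{every} computation with that happens-before relation, so $\ell$ precedes $f$ in $\bcomp$ as well. Accordingly, your swap lemma must be stated and proved for \emph{inversions} (using hb-equality with $\bcomp$), not merely for hb-unrelated adjacent pairs, and your enumeration of forced-order families must include early reads alongside FIFO and fence ordering. With that amendment the argument goes through.
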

\begin{proof}
Assume $\initialeventstate\transitionto{\bcomp}\astate'$.
Since $\acomp$ and $\bcomp$ have the same program order $\progorder$,
it means $\astate$ and $\astate'$ have the same index counter $\indexconf$ and program counter $\pcconf$.
Moreover, since $\acomp$ and $\bcomp$ have the same conflict order $\conflictorder$, $\astate$ and $\astate'$ have the same memory valuation $\valconf$.
Finally, since computations $\acomp$ and $\bcomp$ empty the buffers, $\astate$ and $\astate'$ have empty buffers.
In conclusion, $\astate=\astate'$. \qed
\end{proof}
\endlemmaatend
\begin{alemma}\label{Lemma:RobustImpliesReach}
If $\aprogram$ is robust against TSO, then $\reachOf{\seqcon}{\aprogram}=\reachOf{\tso}{\aprogram}$.
\end{alemma}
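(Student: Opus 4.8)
The plan is to prove the two inclusions separately. The inclusion $\reachOf{\seqcon}{\aprogram} \subseteq \reachOf{\tso}{\aprogram}$ is immediate and does not even need the robustness hypothesis: the text already observes $\computationsOf{\seqcon}{\aprogram} \subseteq \computationsOf{\tso}{\aprogram}$, and since reachability is defined via computations ending in $\finaleventstates$, every state reached by an SC computation is in particular reached by a TSO computation.

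The interesting direction is $\reachOf{\tso}{\aprogram} \subseteq \reachOf{\seqcon}{\aprogram}$. First I would fix a state $\astate \in \reachOf{\tso}{\aprogram}$ and pick a witnessing computation $\tau \in \computationsOf{\tso}{\aprogram}$ with $\initialeventstate \transitionto{\tau} \astate$; by definition of $\reachOf{\tso}{\aprogram}$ the state $\astate$ lies in $\finaleventstates$, i.e.\ it has empty buffers. Next I would apply the robustness hypothesis to $\tau$ to obtain an SC computation $\tau' \in \computationsOf{\seqcon}{\aprogram}$ with $\happensbeforeOf{\tau} = \happensbeforeOf{\tau'}$.

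The key step is to observe that $\tau'$, being an SC computation, is in particular a TSO computation (again via $\computationsOf{\seqcon}{\aprogram} \subseteq \computationsOf{\tso}{\aprogram}$). Hence both $\tau$ and $\tau'$ are TSO computations with identical happens-before relations, and $\tau$ reaches $\astate$. I can therefore invoke Lemma~\ref{Lemma:SameTraceReach} directly, which yields $\initialeventstate \transitionto{\tau'} \astate$. Since $\tau'$ is an SC computation reaching $\astate$, we conclude $\astate \in \reachOf{\seqcon}{\aprogram}$, establishing the remaining inclusion and hence the claimed equality.

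The only subtlety — and the step most likely to require care — is ensuring that Lemma~\ref{Lemma:SameTraceReach} actually applies, i.e.\ that the SC witness $\tau'$ is legitimately regarded as a TSO computation so that the lemma's hypotheses are met. Once this is in place, the lemma does all the remaining work: equality of happens-before forces equality of the index and program counters (via $\progorder$), of the memory valuation (via $\conflictorder$), and of the buffers (both empty, as the computations end in $\finaleventstates$). No additional case analysis on the semantic rules is needed.
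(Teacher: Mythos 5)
Your proof is correct and matches the paper's own argument essentially verbatim: both directions are handled identically, with the $\subseteq$ inclusion following from $\computationsOf{\seqcon}{\aprogram}\subseteq\computationsOf{\tso}{\aprogram}$ and the reverse inclusion obtained by applying robustness to get $\tau'$ with $\happensbeforeOf{\tau}=\happensbeforeOf{\tau'}$ and then invoking Lemma~\ref{Lemma:SameTraceReach} after regarding $\tau'$ as a TSO computation. Your closing remark about how that lemma forces equality of the reached states even anticipates the paper's proof of Lemma~\ref{Lemma:SameTraceReach} itself.
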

\begin{proof}
The $\subseteq$ inclusion holds by $\computationsOf{\seqcon}{\aprogram}\subseteq\computationsOf{\tso}{\aprogram}$.
For the reverse, assume that there is a \tso\ computation $\tau\in\computationsOf{\tso}{\aprogram}$
such that $\initialeventstate\transitionto{\tau}\astate$.
Since \aprogram\ is robust, there is an \seqcon\ computation $\tau'\in\computationsOf{\seqcon}{\aprogram}$
such that $\happensbeforeOf{\tau}=\happensbeforeOf{\tau'}$.
Then $\tau'\in\computationsOf{\tso}{\aprogram}$ and, by Lemma~\ref{Lemma:SameTraceReach},
$\initialeventstate\transitionto{\tau'}\astate$ so $\astate$ is \seqcon-reachable. \qed
\end{proof}

Our robustness-based oracle makes use of the following characterization of robustness from earlier work~\cite{BDM13}:
a program $\aprogram$ is not robust against \tso{} iff $\computationsOf{\tso}{\aprogram}$ contains a computation,
called \emph{witness}, as in Figure~\ref{Figure:TSOWitness}. \vskip -1em
\begin{alemma}[\cite{BDM13}]\label{Lemma:RobustEquivNoWitness}
Program $\aprogram$ is robust against TSO if and only if the set of TSO computations $\computationsOf{\tso}{\aprogram}$ contains no witness.
\end{alemma}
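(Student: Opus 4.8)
The plan is to route both directions of the equivalence through the classical fact that a \tso{} computation is reorderable to an \seqcon{} one with the same dependencies exactly when its happens-before relation is acyclic, and then to argue that any cyclic happens-before relation can be distilled into the canonical witness shape of Figure~\ref{Figure:TSOWitness}.

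First I would establish the bridge: for $\tau\in\computationsOf{\tso}{\aprogram}$ there is an \seqcon{} computation $\tau'$ with $\happensbeforeOf{\tau}=\happensbeforeOf{\tau'}$ if and only if $\happensbeforeOf{\tau}$ is acyclic. The forward direction is immediate, since an \seqcon{} computation is a sequence and hence totally orders its events; realising the same happens-before relation would embed that relation into a total order, forcing acyclicity. For the backward direction I would take a linear extension of the partial order $\happensbeforeOf{\tau}$ in which each store is placed immediately before its matching flush --- which one checks is compatible with $\happensbefore$, since the pair is already tied by $\equivalenceorder$ and nothing is forced strictly between them --- and read it off as an \seqcon{} computation. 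Program order is respected thread-wise by construction, and, exactly as in Lemma~\ref{Lemma:SameTraceReach}, the conflict order $\conflictorder$ pins down every memory valuation so that all loads read consistently; the induced $\progorder$, $\equivalenceorder$, and $\conflictorder$ then coincide with those of $\tau$. By the very definition of robustness, this makes robustness of $\aprogram$ equivalent to every computation in $\computationsOf{\tso}{\aprogram}$ having an acyclic happens-before relation.

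The direction ``a witness exists $\Rightarrow$ $\aprogram$ is not robust'' is then easy: the witness of Figure~\ref{Figure:TSOWitness} is constructed so that its events close a happens-before cycle (as the example $\tau_\wit$ already does in Figure~\ref{Figure:DekkerTrace}), so by the bridge it has no \seqcon-equivalent computation and hence demonstrates non-robustness.

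The substantial work is the converse, ``$\aprogram$ not robust $\Rightarrow$ a witness exists''. Here I would start from some computation whose happens-before relation is cyclic and pass to one that carries a cycle of minimal length. A Shasha--Snir style analysis shows that such a minimal cycle alternates program-order/equivalence edges inside threads with conflict edges across threads, and that keeping a single store buffered in one attacking thread already suffices to close it. I would then normalise: flush every non-attacker store eagerly and postpone only the attacker's critical flush, using same-thread reorderings of the kind justified by Lemma~\ref{Lemma:NoUselessDelays}, while checking that this neither breaks feasibility nor destroys the offending conflict edges. The resulting computation has precisely the shape of Figure~\ref{Figure:TSOWitness}. This normalisation is the main obstacle --- arguing that minimality forces a single attacker and that the eager-flush rescheduling preserves the cycle is delicate --- and it is exactly the content carried out in~\cite{BDM13}.
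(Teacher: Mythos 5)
You should first note what the paper actually does with this statement: it does not prove it. The lemma is imported verbatim from \cite{BDM13} (hence the citation in its header), so the fair benchmark for your attempt is the proof in that earlier paper. Measured against it, your architecture is the right one and essentially the one used there: reduce robustness, in the style of Shasha and Snir~\cite{ShashaSnir88}, to acyclicity of the happens-before relation, then normalise a violating computation into the canonical shape of Figure~\ref{Figure:TSOWitness} via a minimality argument that isolates a single delaying thread. Your bridge is correct as sketched, with one wrinkle to make explicit: as defined here, $\happensbeforeOf{\tau}$ contains the symmetric edges $\equivalenceorder$ between a store and its matching flush, so it is never literally acyclic; ``acyclic'' must be read on the quotient that merges matched store/flush pairs (equivalently, excluding cycles consisting of an $\equivalenceorder$ edge alone). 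Your linearisation --- placing each store immediately before its matching flush --- is exactly the right move on that quotient, and your appeal to preservation of $\conflictorder$ to argue that the linearised sequence is a feasible \seqcon{} computation of the same program (same flushes precede each load, hence same values, hence the same conditionals pass) is the same consistency argument the paper isolates as Lemma~\ref{Lemma:SameTraceReach}. The easy direction is also fine: \wite{} together with $\attackstore\equivalenceorder\attackflush$ and $\attackstore\progorder\attackload$ closes a cycle, so by the bridge a witness refutes robustness.

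The shortfall is that the converse --- not robust implies a witness exists --- is where the entire mathematical content of the lemma lives, and your proposal only names its stages (minimal cycle, single attacker, eager flushing) without carrying any of them out; you concede as much by deferring to \cite{BDM13}. Be aware that this step is not a routine application of a reordering lemma: Lemma~\ref{Lemma:NoUselessDelays} in this paper concerns only same-thread local delays and is used for Lemma~\ref{Lemma:Extension}, whereas the normalisation here must argue globally that a minimal violation has exactly one thread delaying stores (\wita{}), that the delayed store can be taken to be the \emph{first} delayed store with $\attackload$ the \emph{last} event it is delayed past (\witb{}), that $\tau_3$ can be purged of attacker events (\witc{}), that the trailing flushes in $\tau_4$ can be arranged to satisfy $\addrOf{\anevent}\neq\addrOf{\attackload}$ (\witd{}), and that the strengthened cyclicity condition $\attackload\happensbefore^+\anevent$ holds for \emph{all} events of $\tau_3\cdot\attackflush$, not merely for one event on some cycle (\wite{}). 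None of this surgery is sketched in your proposal, and it is precisely the delicate locality argument of \cite{BDM13}. So: as a reconstruction of the cited proof's strategy your write-up is accurate and well organised, but as a standalone proof it is incomplete at exactly the step for which the paper cites prior work.
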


A witness $\tau$ delays stores of only one thread in $\aprogram$. The other threads adhere to the \seqcon{} semantics.
Conditions \wita{} -- \witd{} in Figure~\ref{Figure:TSOWitness} describe formally this restrictive behavior.
Furthermore, condition \wite{} implies that no computation $\tau'\in\computationsOf{\seqcon}{\aprogram}$ can satisfy
$\happensbeforeOf{\tau}=\,\happensbeforeOf{\tau'}$.

The computation $\tau_\wit$ is a witness for the program in Figure~\ref{Figure:Dekker}.
Indeed, in no \seqcon{} computation of this program can both loads read the initial values of $\xaddr$ and $\yaddr$.
Relative to Figure~\ref{Figure:TSOWitness}, we have
$\attackstore=\one{\storeevent}$, $\attackload=\one{\loadevent}$, $\attackflush=\one{\flushevent}$,
$\tau_3=\two{\storeevent}\cdot\two{\flushevent}\cdot\two{\loadevent}$, and $\tau_1=\tau_2=\tau_4=\varepsilon$.
\begin{figure}[!ht]
\vskip -0.75em
\centering
\begin{tikzpicture}[node distance=1.25cm]
  \node(inv) {$\tau =$};
  \node(isu) [right of=inv,node distance=1.75cm,color=black]{$\attackstore$};
  \node(ld) [right of=isu,node distance=2cm,color=black] {$\attackload$};
  \node(st) [right of=ld,node distance=2cm,color=black] {$\attackflush$};
  \node(b1) [right of=isu,node distance=1cm] {};
  \node(b2) [right of=st,node distance=1.5cm] {};

  \draw[-, very thick] (inv)edge node[below]{$\tau_1$} (isu);
  \draw[-, very thick] (isu)edge node[below]{$\tau_2$} (ld);
  \draw[-, very thick,color=black] (ld) -- node[below,color=black]{$\tau_3$}(st);
  \draw[-, very thick,color=black] (st)edge node[below,color=black]{$\tau_4$}(b2);
  \draw[-,thick] (isu.north) edge[out=15,in=160] (st.north);
  \draw[-,thick] (b1.north) edge[out=20,in=160] (b2.north);
\end{tikzpicture}
\caption{Witness $\tau$ with $\attackstore\equivalenceorder\attackflush$ and thread $\attackthread:=\threadOf{\attackstore}=\threadOf{\attackload}$.
Witnesses satisfy the following constraints:
\wita~Only thread $\attackthread$ delays stores.
\witb\ Event $\attackflush$ is the first delayed store of $\attackthread$ and
$\attackload$ is the last event of $\attackthread$ past which $\attackflush$ is delayed.
So $\tau_2$ contains neither flush events nor fences of $\attackthread$.
\witc\ Sequence $\tau_3$ contains no events of thread $\attackthread$.
\witd\ Sequence $\tau_4$ consists only of flush events $\anevent$ of thread $\attackthread$.
All these events $\anevent$ satisfy $\addrOf{\anevent}\neq\addrOf{\attackload}$.
\wite\ We require $\attackload\happensbefore^+\anevent$ for all events $\anevent$ in $\tau_3\cdot\attackflush$.}
\vskip -.75em
\label{Figure:TSOWitness}
\end{figure}

The \emph{robustness-based oracle}, given input $\aprogram$,
finds a witness $\awitness$ as in Figure~\ref{Figure:TSOWitness}
and returns the sequence of instructions for the events in $\attackstore\cdot\tau_2\cdot\attackload$
that belong to thread $\attackthread$.
If no witness exists, it returns $\varepsilon$.
By Lemmas~\ref{Lemma:RobustImpliesReach}~and~\ref{Lemma:RobustEquivNoWitness},
this satisfies the oracle conditions from Section~\ref{Section:LazyReachability}.
Note that, given a robust program and the robustness-based oracle as inputs,
Algorithm~\ref{Algorithm:Reachability} returns within the first iteration of the while loop.

\section{Experiments}\label{Section:Experiments}
We have implemented our lazy TSO reachability algorithm on top of the tool \trencher~\cite{Trencher}.
\trencher{} was initially developed for checking robustness
and implements the algorithm for finding witness computations described in~\cite{BDM13}.
Our implementation reuses that algorithm as a robustness-based oracle.
\trencher{} originally used \spin~\cite{Holzmann97} as back-end \seqcon\ reachability checker.
The current implementation, however, uses a simpler model checker that exploits information about the instruction set for partial-order reduction.
Moreover, it avoids having to compile the verifier executables (pan) as is the case for \spin.

We have implemented Algorithm~\ref{Algorithm:Reachability} with the following amendments.
First, the extension does not delete the store instruction $\aninstruction_1$.
This ensures the extended program has a (sound) superset of the \tso{} behaviors of the original program.
Second, the extension only adds instructions along $\extensionstate{1},\ldots,\extensionstate{n}$.
The remaining instructions were added to ensure all behaviors of the original program exist in the extended program, once $\aninstruction_1$ is removed.
The resulting algorithm is guaranteed to give correct results for cyclic programs. Of course, it cannot be guaranteed to terminate in general.
Finally, our implementation explores extensions due to different instruction sequences in parallel, rather than sequentially.

We compare our prototype implementation against two other model checkers that support \tso{} semantics:
\memorax~\cite{AbdullaACLR12} (revision 4f94ab6) and \cbmc~\cite{Clarke04atool} (version 4.7).
\memorax\ implements a sound and complete reachability checking procedure
by reducing to coverability in a well-structured transition system.
\cbmc{} is an SMT-based bounded model checker for C programs.
Consequently, it is sound, but not complete:
it is complete only up to a given bound on the number of loop iterations in the input program.
\subsection{Examples}
\begin{wrapfigure}{r}{0.625\textwidth}
	\vskip -2.75em
	\centering
	\footnotesize
	\setlength{\tabcolsep}{2.25pt}
\begin{tabular}{|c||l||r|r|r||r||r|r|}
\hline
\# & Program & T & St & Tr & RQ & CPU & Real\\
\hline
\hline
1&Parker (non-rob)&2&11&10&4&8&5\\
\hline
2&Peterson (non-rob)&2&14&18&12&21&13\\
\hline
3&Dekker (non-rob)&2&24&30&30&171&70\\
\hline
4&Lamport (non-rob)&3&33&36&27&1839&694\\
\hline
5&MCS Lock&4&52&50&30&127&61\\
\hline
6&CLH Lock&3&43&41&70&10&7\\
\hline
7&Lock-Free Stack&4&46&50&14&9&7\\
\hline
\end{tabular}
\caption{\trencher\ benchmarking results. The tests are available online~\cite{Trencher}. Times here are in milliseconds.}
	\vskip -2em
\label{Figure:Experiments}
\end{wrapfigure}

We tested our tool on a set of examples. %
Figure~\ref{Figure:Experiments} summarizes characteristics of the examples taken from the initial \trencher\ tests:
number of threads (T), states (St), and transitions (Tr).
The first example is a model of the buggy Parker class from Java VM~\cite{dice09:park}.
The next three ex\-am\-ples are mu\-tu\-al ex\-clu\-sion protocols implemented via shared va\-ri\-a\-bles.
These pro\-to\-cols do not guar\-an\-tee mu\-tu\-al ex\-clu\-sion un\-der \tso.
We tested Dekker's and Peterson's algorithms for two threads, and Lamport's fast mutex~\cite{lamport87:fast} for three threads.
The last three tests from Figure~\ref{Figure:Experiments} give statistics concerning reachability in robust test cases for the lock-free stack,
and for the MCS and CLH locking algorithms from \cite{Herlihy2008}.

We also performed three parametrized tests.
First, we varied the number of threads in Lamport's fast mutex~\cite{lamport87:fast} (see left-hand-side of Figure~\ref{Figure:Lamport}).
The modified Dekker in Figure~\ref{Figure:Diamond}
is inspired by the examples of the fence-insertion tool \textsc{musketeer}~\cite{AKNP14}
and adds an ``$N$-branching diamond'' (see right-hand-side of Figure~\ref{Figure:Diamond}) to both program threads.
Lastly, the program in Figure~\ref{Figure:Countdown} places stores to address $\xaddr$ on a length $N$ loop in thread $\athread_1$:
since $\athread_1$ expects to load the initial $\yaddr$ value
while $\athread_2$ expects to load $1$ and then $0$ from $\xaddr$,
an execution that reaches the goal state goes through the length $N$ loop twice.
\begin{figure}[!ht]
	\hspace{-0.25cm}
	\centering\small
	\begin{tikzpicture}[->,>=stealth',shorten >=1pt, auto, node distance=1cm, transform shape, scale=1]%
		\tikzstyle{every state}=[fill=white,circle,draw=black,inner sep=0pt,text=black,minimum size=6pt]
			\node[state,initial,initial text=$\athread_i$] (qstart) {};
			\node[state] (q0) [below of=qstart] {};
			\node[state] (q1) [below of=q0] {};
			\node[state] (q2) [below of=q1] {};
			\node[state] (q3) [below of=q2] {};
			\node[state] (q4) [below of=q3] {};
			\node[state] (q5) [node distance=3.75cm,right of=q2] {};
			\node[state] (q6) [above of=q5] {};
			\node[state] (q7) [above of=q6] {};
			\node[state,accepting] (qenter) [below of=q5] {};
			\node[state] (q9) [below of=qenter] {};
			\path	(qstart)	edge node [swap] {$\thelocal{\areg}{i}$} (q0)
					(q0)		edge node [swap] {$\thestore{\xaddr}{\areg}$} (q1)
					(q1)		edge node [swap] {$\theload{\areg_\yaddr}{\yaddr}$} (q2)
					(q2)		edge [bend right] node [swap] {$\theassume{\areg_\yaddr\hspace{-0.2pt}\neq\hspace{-0.2pt}0}$} (q0)
					(q2)		edge node [swap] {$\theassume{\areg_\yaddr\hspace{-0.2pt}=\hspace{-0.2pt}0}$} (q3)
					(q3)		edge node [swap] {$\thestore{\yaddr}{\areg}$} (q4)
					(q5)		edge node {$\theassume{\areg_\xaddr\hspace{-0.2pt}\neq\hspace{-0.2pt}\areg}$} (q6)
					(q6)		edge node {$\theload{\areg_\yaddr}{\yaddr}$} (q7)
					(q7)		edge node [swap] {$\theassume{\areg_\yaddr\hspace{-0.2pt}\neq\hspace{-0.2pt}\areg}$} (q0);
			\path	(q5)		edge node [swap] {$\theassume{\areg_\xaddr\hspace{-0.2pt}=\hspace{-0.2pt}\areg}$} (qenter)
					(qenter)	edge node [swap] {$\thestore{\yaddr}{0}$} (q9);
			\draw (q4.north east) .. controls +(60:1) and +(180:3.5) .. (q5.west) node [swap,pos=.45,xshift=-2pt] {$\theload{\areg_\xaddr}{\xaddr}$};
	\end{tikzpicture}
	\begin{tikzpicture}[transform shape,scale=.75,baseline=-1cm]
		\begin{axis}[xtick={2,3,4},ytick={0,23,39},xlabel={$N$ --- number of threads},ylabel={wall-clock time (seconds)},
			legend style={draw=none},legend pos={north west}]
			\addplot[smooth,color=green,mark=triangle] plot coordinates {
				(2,0.021)
				(3,0.878)
				(4,22.653)
			};
			\addplot[smooth,color=blue,mark=square] plot coordinates {
	        	(2,0.012)
				(3,0.694)
          	(4,38.817)
	    	};
			\addplot[smooth,color=red,mark=triangle] plot coordinates {
				(2,0.418)
				(3,0.643)
				(4,0.982)
			};
	    \legend{\memorax,\trencher,\cbmc}
	    \end{axis}
	\end{tikzpicture}
\caption{The $i$-th Lamport mutex thread (left) and running times for $N$ threads (right).}
\label{Figure:Lamport}
\end{figure}
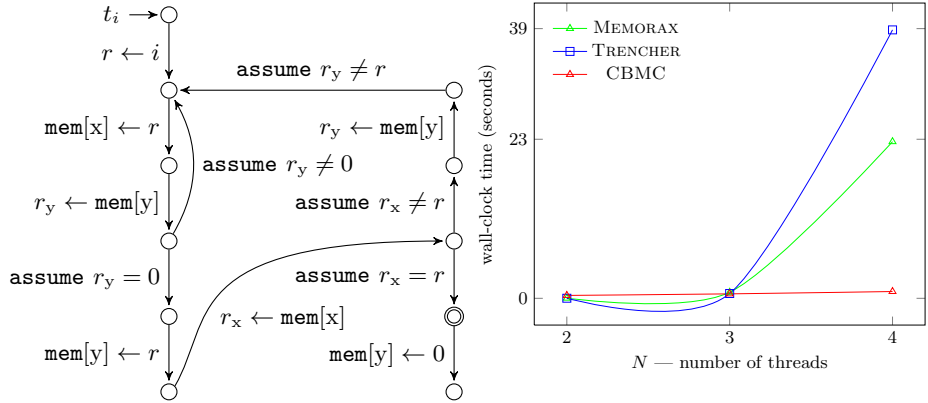

\begin{figure}[!ht]
  \vskip -1em
	\centering\small
	\begin{tikzpicture}[->,>=stealth',shorten >=1pt, auto, node distance=1cm, transform shape, scale=1]%
		\tikzstyle{every state}=[fill=white,circle,draw=black,inner sep=0pt,text=black,minimum size=6pt]
		\begin{scope} %
			\node[draw=none] (control) {}; %
			\node[state,initial,initial left,initial text=$\athread_1$] (q10) [node distance=0.5cm,above of=control] {};
			\node[state] (q11) [below of=q10] {};
			\node[state] (q1e) [below of=q11] {};
			\node[state] (q12) [below of=q1e] {};
			\node[accepting,state] (q13) [below of=q12] {};
			\path	(q10)		edge	node [swap] {$\thestore{\xaddr}{1}$} (q11)
					(q1e)		edge 	node [swap] {$\theload{r_1}{\yaddr}$} (q12)
					(q12)		edge	node [swap] {$\theassume{r_1\hspace{-.2em}=\hspace{-.1em}0}$} (q13);
			\path[dashed] (q11) edge node[swap] {$\Diamond_N(a)$} (q1e);
		\end{scope}
		\begin{scope} %
			\node[state,initial,initial left,initial text=$\athread_2$] (q20) [node distance=2.65cm,right of=q10] {};
			\node[state] (q21) [below of=q20] {};
			\node[state] (q2e) [below of=q21] {};
			\node[state] (q22) [below of=q2e] {};
			\node[accepting,state] (q23) [below of=q22] {};
			\path	(q20)		edge	node [swap] {$\thestore{\yaddr}{1}$} (q21)
					(q2e)		edge 	node [swap] {$\theload{r_2}{\xaddr}$} (q22)
					(q22)		edge	node [swap] {$\theassume{r_2\hspace{-.2em}=\hspace{-.1em}0}$} (q23);
			\path[dashed] (q21) edge node[swap] {$\Diamond_N(b)$} (q2e);
		\end{scope}
		\begin{scope} %
			\node[state,label=above:entry] (q0) [node distance=5.4cm, right of=control] {};
			\node[draw=none] (diamond) [left of=q0] {$\Diamond_N(a)$:};
			\node[state] (q1) [below of=q0] {};
			\node[draw=none] (empty) [below of=q1] {$\cdots$};
			\node[state] (q2) [node distance=0.5cm, left of=empty] {};
			\node[state] (q3) [node distance=0.5cm, right of=empty] {};
			\node[state,label=below:exit] (q4) [below of=empty] {};
			\path	(q0)		edge node {$\theload{\areg}{a}$} (q1)
					(q1)		edge node [yshift=-4pt,swap] {$\theassume{\areg\hspace{-0.2pt}=\hspace{-0.2pt}0}$} (q2)
					(q1)		edge node [yshift=-5pt] {$\theassume{\areg\hspace{-0.2pt}=\hspace{-0.2pt}i}$  $\forall i\in\intrange{1}{N-1}$} (q3)
					(q2)		edge node [yshift=6pt,swap] {$\thestore{a}{1}$} (q4)
					(q3)		edge node [yshift=6pt] {$\thestore{a}{(i+1)\hspace{-4pt}\mod{N}}$} (q4);	
		\end{scope}
	\end{tikzpicture}
\caption{Dekker's algorithm modified so that an ``$N$-branching diamond'' over distinct addresses $a,b\notin\set{\xaddr,\yaddr}$ 
is placed between the accesses to $\xaddr$ and $\yaddr$.
A final goal state is \tso{}-reachable if the first store is delayed past the last load in either $\athread_1$ or $\athread_2$.}
\label{Figure:Diamond}
\end{figure}
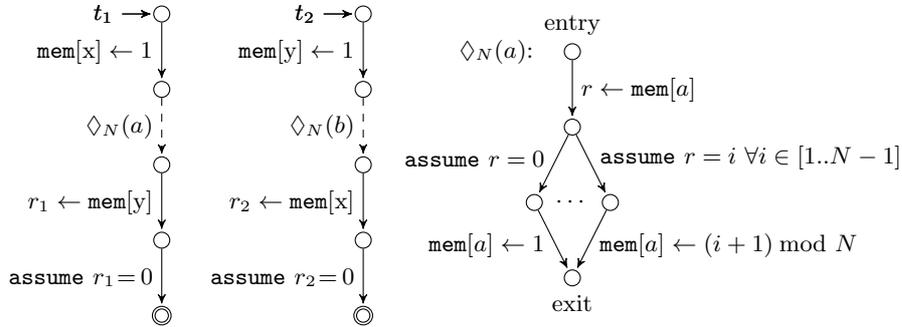

\begin{figure}[!ht]
	\hspace{-0.25cm}
	\centering\small
	\begin{tikzpicture}[->,>=stealth',shorten >=1pt, auto, node distance=1cm, transform shape, scale=1]%
		\tikzstyle{every state}=[fill=white,circle,draw=black,inner sep=0pt,text=black,minimum size=6pt]
		\begin{scope} %
			\node[state,initial,initial text=$\athread_1$] (q10) {};
			\node[state] (q11) [above of=q10] {};
			\node[state] (q12) [above of=q11] {};
			\node[state] (q13) [below of=q10] {};
			\node[state] (q14) [below of=q13] {};
			\node[accepting,state] (q15) [below of=q14] {};
			\path	(q10)		edge	node {$\theassume{\areg_1\hspace{-0.2pt}<\hspace{-0.2pt}N}$} (q11)
					(q10)		edge 	node [swap] {$\theassume{\areg_1\hspace{-0.2pt}=\hspace{-0.2pt}N}$} (q13)
					(q13)		edge	node [swap] {$\theload{\areg_2}{\yaddr}$} (q14)
					(q14)		edge	node [swap] {$\theassume{\areg_2\hspace{-0.2pt}=\hspace{-0.2pt}0}$} (q15)
					(q13)		edge [bend right]	node [swap] {$\thelocal{\areg_1}{0}$} (q10)
					(q11)		edge	node {$\thestore{\xaddr}{\areg_1}$} (q12)
					(q12)		edge [bend left] node [midway] {$\thelocal{\areg_1}{\areg_1+1}$} (q10);
		\end{scope}
		\begin{scope}[bend angle=20] %
			\node[state,initial,initial text=$\athread_2$,node distance=3.65cm] (q20) [right of=q12] {};
			\node[state] (q21) [below of=q20] {};
			\node[state] (q22) [below of=q21] {};
			\node[state] (q23) [below of=q22] {};
			\node[state] (q24) [below of=q23] {};
			\node[accepting,state] (q25) [below of=q24] {};
			\path	(q20)		edge	node [swap] {$\thestore{\yaddr}{1}$} (q21)
					(q21)		edge 	node [swap] {$\theload{\areg_3}{\xaddr}$} (q22)
					(q22)		edge	node [swap] {$\theassume{\areg_3\hspace{-0.2pt}=\hspace{-0.2pt}1}$} (q23)
					(q23)		edge 	node [swap] {$\theload{\areg_3}{\xaddr}$} (q24)
					(q24)		edge 	node [swap] {$\theassume{\areg_3\hspace{-0.2pt}=\hspace{-0.2pt}0}$} (q25);
		\end{scope}	
	\end{tikzpicture}
	\begin{tikzpicture}[transform shape,scale=.75,baseline=-1cm]
		\begin{axis}[xtick={3,5,7,9,11,13},ytick={0,8,15,45},xlabel={$N$ --- length of the loop},ylabel={wall-clock time (seconds)},legend style={draw=none},legend pos={north west}]
			\addplot[smooth,blue,mark=square] plot coordinates {
	        	(3,0.008)
          	(5,0.009)
		     	(7,0.013)
		     	(9,0.020)
	         	(11,0.026)
				(13,0.034)
	    	};
	    	\addplot[smooth,color=red,mark=triangle] plot coordinates {
	         	(3,0.452)
	         	(5,0.939)
	         	(7,2.263)
	         	(9,8.037)
	         	(11,15.156)
	         	(13,45.963)
	     	};
	    \legend{\trencher,\cbmc}
	    \end{axis}
	\end{tikzpicture}
\caption{A final goal state is \tso-reachable if $\athread_1$ goes through the (length $N$) loop two times:
once to satisfy $\theassume{\areg_3=1}$ and the second time to satisfy $\theassume{\areg_3=0}$.}
\label{Figure:Countdown}
\end{figure}
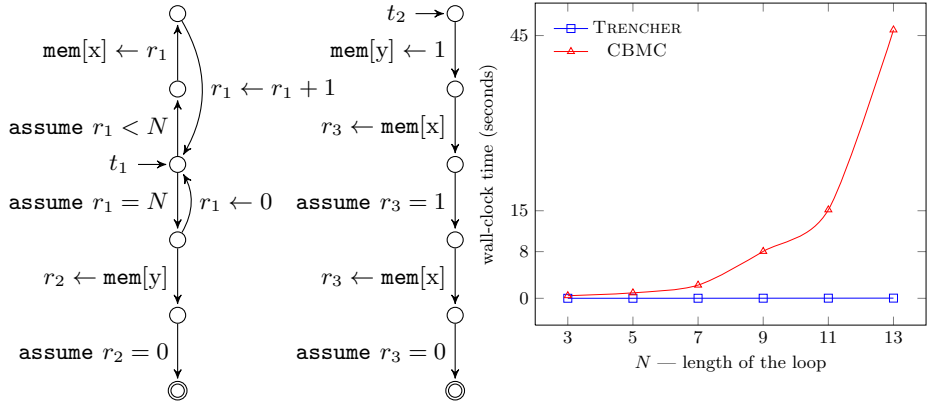

\subsection{Evaluation}
We ran all tests on a QEMU @ 2.67GHz virtual machine (16 cores) with 8GB RAM run\-ning GNU/Linux.
The table in Figure~\ref{Figure:Experiments} summarizes the results of the \trencher\ benchmark tests.
RQ is the number of \seqcon{} reachability queries raised by \trencher.
The columns CPU and Real give the total CPU time and the wall-clock time for performing a test.

The first graph in Figure~\ref{Figure:ResultsGraphs} depicts the running times of the three tools on the non-robust examples
from Figure~\ref{Figure:Experiments}.
For \cbmc, we used the versions of the mutual exclusion algorithms that its authors provide.
For \memorax, we hand-wrote {\tt *.rmm} files for the first 4 test programs.
We did not perform a comparison for robust programs:
if \seqcon{} reachability returns false on an input program,
our implementation decides mutual exclusion as fast as \trencher\ is able to determine robustness.
Moreover, \cbmc\ implements strictly an under-approximative method where the number of loop iterations is bounded.
Our robust tests, however, contain unbounded loops.

\begin{figure}[!t]
	\centering%
	\footnotesize
	\begin{tikzpicture}[transform shape,scale=.75,baseline=1.95cm]
		\begin{axis}[xlabel={Test order index from Figure~\ref{Figure:Experiments}.},ylabel={wall-clock time (seconds)},legend style={draw=none},
      legend pos={north west},xtick={1,2,3,4},ybar,bar width=7pt, xtick align=inside]
	    	\addplot[color=green!50!black, pattern = north west lines] plot coordinates { %
	       	(1,0.012) %
	       	(2,0.111) %
    		  (3,0.028) %
  		   	(4,0.831) %
			};
			\addplot[color=blue, pattern = horizontal lines] plot coordinates { %
	      	(1,0.005)
	       	(2,0.013)
	  			(3,0.07)
		  		(4,0.694)
	    	};
	    	\addplot[color=red, pattern = north east lines] plot coordinates { %
	       	(1,0.451)
			  	(2,0.329)
				  (3,0.345)
				  (4,0.643)
	     	};
			\addplot [sharp plot, update limits=false] coordinates {(0,-.0025) (6,-.0025)};
	    \legend{\memorax,\trencher,\cbmc}
	    \end{axis}
	\end{tikzpicture}
		\begin{tikzpicture}[transform shape,scale=.75,baseline=1.95cm]
		\begin{axis}[xlabel={$N$ --- diamond branching factor},ylabel={wall-clock time (seconds)},legend style={draw=none},legend pos={north west},
			xtick={10,20,30,40},ytick={0,15,30,60}]
			\addplot[smooth,blue,mark=square] plot coordinates {
				(10,0.008)
				(15,0.010)
				(20,0.012)
				(25,0.016)
				(30,0.025)
				(35,0.031)
				(40,0.038)
	    	};
	    	\addplot[smooth,color=green!50!black,mark=triangle] plot coordinates {
				(10,0.958)
				(15,2.932)
				(20,6.881)
				(25,13.820)
				(30,23.996)
				(35,40.350)
				(40,63.498)
	     	};
	    \legend{\trencher,\memorax}
	    \end{axis}
	\end{tikzpicture}
\caption{Running times for the non-robust tests in Figure~\ref{Figure:Experiments} (left) and Figure~\ref{Figure:Diamond} (right).}
\label{Figure:ResultsGraphs}
\end{figure}
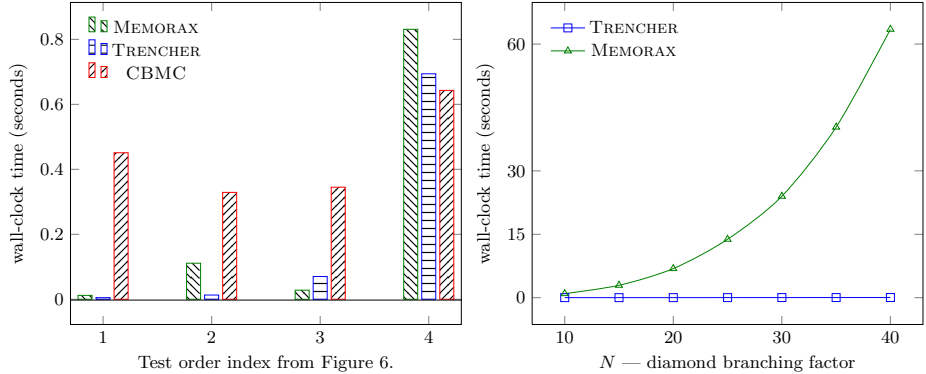

The high load needed to verify Lamport's mutex --- in comparison with the other Figure~\ref{Figure:Experiments} tests ---
is justified by the correlation between the program's data domain size and its number of threads.
For a larger number of threads, the right-hand-side graph in Figure~\ref{Figure:Lamport} shows that \cbmc\ is fastest. 
This is the case since, actually, the smallest unwind bound suffices for \cbmc\ to conclude reachability.
For \memorax\ and \trencher\ the system runs out of memory when $N=5$.
This underlines once again just how troublesome the state-space explosion is for \tso\ reachability. 
Although it is not easily noticeable in the picture, \memorax's exponential scaling is better than \trencher's: 
although \trencher\ is slightly faster than \memorax\ for $N\in\set{2,3}$, \memorax\ clearly outperforms \trencher\ when $N=4$.

The graph in Figures~\ref{Figure:Countdown} show that,
for the second parameterized test,
our prototype is faster than \cbmc.
Indeed, with increasing $N$, an ever larger number of constraints need to be generated by \cbmc.
For \trencher{}, regardless of the value of $N$, it takes three \seqcon{} reachability queries to conclude \tso{} reachability.

The second graph in Fig\-ure~\ref{Figure:ResultsGraphs} shows that,
for the programs described by Figure~\ref{Figure:Diamond},
our prototype is faster than \memorax.
It seems \memorax\ cannot cope well with the branching factor that the parameter $N$ introduces.

To better understand the difficulty of the latter two parametric tests,
we present the exponential scaling behaviors of \trencher\ in Figure~\ref{Figure:AppendixGraphs}.
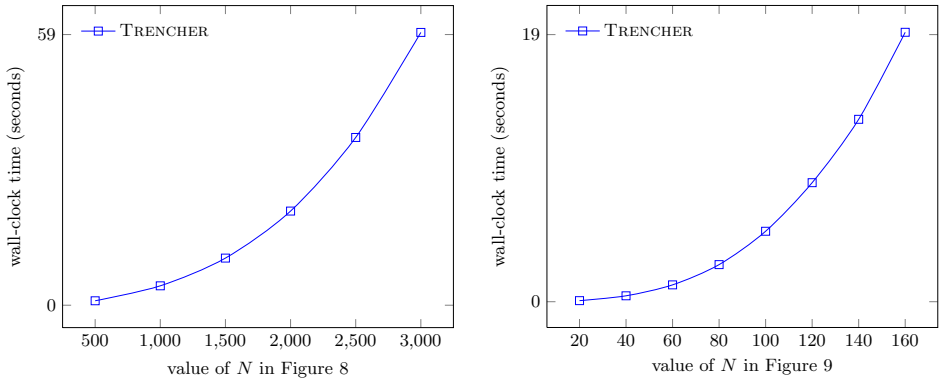
\begin{figure}[!h]
	\hspace{-0.25cm}
	\centering\small
	\begin{tikzpicture}[transform shape,scale=.75,baseline=-1cm]
		\begin{axis}[xlabel={value of $N$ in Figure~\ref{Figure:Diamond}},ylabel={wall-clock time (seconds)},legend style={draw=none},legend pos={north west},
			xtick={500,1000,1500,2000,2500,3000},ytick={0,59}]
			\addplot[smooth,blue,mark=square] plot coordinates {
				(500, 0.968)
	        	(1000, 4.228)
		     	(1500, 10.272)
				(2000, 20.523)
		     	(2500, 36.567)
		     	(3000, 59.465)
	    	};
	    \legend{\trencher}
	  \end{axis}
	\end{tikzpicture}
	\hfill
	\begin{tikzpicture}[transform shape,scale=.75,baseline=-1cm]
		\begin{axis}[xlabel={value of $N$ in Figure~\ref{Figure:Countdown}},ylabel={wall-clock time (seconds)},legend style={draw=none},legend pos={north west},
			xtick={20,40,60,80,100,120,140,160},ytick={0,19}]
			\addplot[smooth,blue,mark=square] plot coordinates {
	         	(20, 0.076)
				(40, 0.421)
				(60, 1.199)
				(80, 2.639)
				(100, 5.012)
				(120, 8.467)
				(140, 12.974)
				(160, 19.164)
	    	};
	    \legend{\trencher}
	  \end{axis}
	\end{tikzpicture}
\caption{Additional \trencher\ results for the programs in Figures~\ref{Figure:Diamond} and~\ref{Figure:Countdown}.
\memorax\ takes already 1 minute and 24 seconds for the program in Figure~\ref{Figure:Diamond} and $N=50$,
while \cbmc\ takes 8 minutes and 35 seconds for the program in Figure~\ref{Figure:Countdown} and $N=20$.}
\label{Figure:AppendixGraphs}
\end{figure}

\subsection{Discussion}
Because we find several witnesses in parallel, throughout the experiments our implementation required up to $2$ iterations of the loop in Algorithm~\ref{Algorithm:Reachability}.
In the case of robust programs, one iteration is always sufficient.
This suggests that robustness violations are really the critical behaviors leading to \tso\ reachability.

The experiments indicate that, at least for some programs with a high branching factor,
our implementation is faster than \memorax\ if a useful witness can be found within a small number of iterations of Algorithm~\ref{Algorithm:Reachability}.
Similarly, our prototype is better than \cbmc\ for programs which require a high unwinding bound to make visible \tso{} behavior
reaching a goal state.
Although the two programs by which we show this are rather artificial,
we expect such characteristics to occur in actual code. 
Hence, our approach seems to be strong on an orthogonal set of programs.
In a portfolio model checker, it could be used as a promising alternative to the existing techniques.

To evaluate the practicality of our method, more experiments are needed.
In particular, we hope to be able to substantiate the above conjecture for concrete programs with behavior like that depicted in Figures~\ref{Figure:Diamond} and~\ref{Figure:Countdown}.
Unfortunately, there seems to be no clear way of translating (compiled) C programs into our simplified assembly syntax without substantial abstraction.
To handle C code, an alternative would be to reimplement our method within \cbmc.
But this would force us to determine a-priori a good-enough unwinding bound.
Moreover, we could no longer conclude safety of robust programs with unbounded loops.
\acks
The third author was granted by the Competence Center High Performance Computing and Visualization (CC-HPC)
of the Fraunhofer Institute for Industrial Mathematics (ITWM).
The work was partially supported
by the PROCOPE project ROIS: \emph{Robustness under Realistic Instruction Sets} and
by the DFG project R2M2: \emph{Robustness against Relaxed Memory Models}.

\begin{spacing}{0.9}
\bibliographystyle{plain}%

\end{spacing}
\clearpage
\appendix\appendixtrue
\section{A Simple Safe Program}\label{Appendix:SafeNonterminating}
The program from Figure~\ref{Figure:NonTerminating} is safe since no goal state is \tso{}-reachable:
the initial control states will never be left since the conditionals will never succeed.  
However, the algorithm that we describe for Theorem~\ref{Theorem:PartialCorrectness} does not terminate for this example. 
Although every $\aprogram_k$ that unrolls the program in Figure~\ref{Figure:NonTerminating} up to $k\in\nat$ is found safe,
the algorithm only stops if a \tso{}-reachable state is found or if $\oracleOf{\bprog}=\epsilon$, which is never the case.
\begin{figure}[!ht]
	\vskip -0.5em
	\centering\small
	\begin{tikzpicture}[->,>=stealth',shorten >=1pt, auto, node distance=3cm, transform shape, scale=1,baseline=.65cm]%
		\tikzstyle{every state}=[fill=white,draw=black,text=black,minimum size=.1pt]
		\begin{scope} %
			\node[state,initial,initial text=$\athread_1$,label=below right:$\acontrolstate_{0,1}$] (q10) {};
			\node[state,accepting,label=below right:$\acontrolstate_{f,1}$] (q11) [right of=q10] {};
			\path (q10)		edge node {$\theassume{\areg_1=2}$} (q11)
								edge [loop above] node {$\thestore{\xaddr}{1-\areg_1}$} ()
								edge [loop below] node {$\theload{\areg_1}{\yaddr}$} ();
		\end{scope}
		\begin{scope} %
			\node[state,initial,initial text=$\athread_2$,label=below right:$\acontrolstate_{0,2}$] (q20) [node distance=3cm,right of=q11] {};
			\node[state,accepting,label=below right:$\acontrolstate_{f,2}$] (q21) [right of=q20] {};
			\path (q20)		edge node {$\theassume{r_2=2}$} (q21)
								edge [loop above] node {$\thestore{\yaddr}{1-\areg_2}$} ()
								edge [loop below] node {$\theload{\areg_2}{\xaddr}$} ();		
		\end{scope}	
	\end{tikzpicture}
	\caption{A safe program for which Algorithm~\ref{Algorithm:Reachability} (as in Theorem~\ref{Theorem:PartialCorrectness}) does not terminate.}
	\label{Figure:NonTerminating}
	\vskip -0.5em
\end{figure}
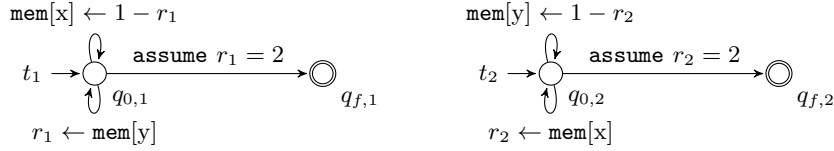

The underlying reason why always $\oracleOf{\bprog}\neq\epsilon$ is that there are infinitely many sequences 
$\aninstruction_\text{store}^m\cdot\aninstruction_\text{load}$, where  
$\aninstruction_\text{store}=\atransition{\acontrolstate_{0,1}}{\acontrolstate_{0,1}}{\thestore{\xaddr}{1-\areg_1}}$,
$\aninstruction_\text{load}=\atransition{\acontrolstate_{0,1}}{\acontrolstate_{0,1}}{\theload{\areg_1}{\yaddr}}$,
and $m\in\nat$.

\section{Proofs missing in Subsection~\ref{Subsection:SoundAndComplete}}\label{Appendix:MissingProofs}
\printproofs
\end{document}